\numberwithin{equation}{section}
\definecolor{FLRWcolor}{rgb}{0.2, 0, 0.2}
\definecolor{coordcolor}{rgb}{0.2, 0.3, 0.6}
\definecolor{boxcolor}{rgb}{0.3, 0.4, 0.3}
\definecolor{sfcolor}{rgb}{0.6, 0, 0.2}
\definecolor{sfacolor}{rgb}{0,0,0}
\definecolor{gcolorUn}{rgb}{1, 0.3, 0}
\definecolor{gcolorSt}{rgb}{0.3, 0.6, 0.5}
\newcommand{\be}{\begin{equation}}
\newcommand{\ee}{\end{equation}}
\newcommand{\keff}{k_\text{eff}}
\newtheorem{theorem}{Theorem} 
\newtheorem{definition}{Definition}
\newtheorem{lemma}{Lemma}
\title{Extended FLRW models and their stability}
\title{Extended FLRW Models:\\
dynamical cancellation of cosmological anisotropies}
\author{Mikjel Thorsrud$^{a}$\footnote{{\bf e-mail}: mikjel.thorsrud@hiof.no} \,, 
	Ben D. Normann$^{b}$\footnote{{\bf e-mail}: ben.d.normann@uis.no} \,,
	Thiago S. Pereira$^{c}$\footnote{{\bf e-mail}: tspereira@uel.br} 
	\\
	$^a$ \small{\em Faculty of Engineering, \O stfold University College,}\\
		\small{\em P.O. Box 700, 1757 Halden, Norway.} \\
	$^b$ \small{\em Faculty of Science and Technology, University of Stavanger,} \\
	\small{\em 4036, Stavanger, Norway.}\\
	$^c$ \small{\em Departamento de Física, Universidade Estadual de Londrina,}\\
	\small{\em Rod. Celso Garcia Cid, Km 380, 86057-970, Londrina, Paraná, Brazil.}\\}
\date{}
\begin{document}

\maketitle

\begin{abstract}
We investigate a corner of the Bianchi models that has not received much attention: ``extended FLRW models'' (eFLRW) defined as a cosmological model with underlying anisotropic Bianchi geometry that nevertheless expands isotropically and can be mapped onto a reference FLRW model with the same expansion history. In order to investigate the stability and naturalness of such models in a dynamical systems context, we consider spatially homogeneous models that contain a massless scalar field $\varphi$ and a non-tilted perfect fluid obeying an equation of state $p=w\rho$. Remarkably, we find that matter anisotropies and geometrical anisotropies tend to cancel out dynamically. Hence, the expansion is asymptotically isotropic under rather general conditions. Although extended FLRW models require a special matter sector with anisotropies that are ``fine-tuned'' relative to geometrical anisotropies, our analysis shows that such solutions are dynamically preferred attractors in general relativity. Specifically, we prove that all locally rotationally symmetric Bianchi type III universes with space-like $\nabla_\mu\varphi$ are asymptotically shear-free, for all $w\in[-1,1]$. Moreover, all shear-free equilibrium sets with anisotropic spatial curvature are proved to be stable with respect to all homogeneous perturbations for $w\geq -1/3$.
\end{abstract}

{\bf Keywords:} general relativity, shear-free cosmological models, anisotropic
curvature.

\newpage
\setcounter{tocdepth}{2}
\tableofcontents
\newpage

\section{Introduction}

The physical aspects of our Universe at the largest cosmological scales were measured with unprecedented accuracy by cosmic microwave background (CMB) experiments such as COBE~\cite{Smoot:1992td}, WMAP~\cite{Komatsu:2008hk, Komatsu:2010fb, Hinshaw:2012aka} and Planck \cite{Ade:2013zuv,Ade:2015xua,planck18}. Despite confirmations by the Planck Collaboration of CMB spectra still in tension with the predicted lensing amplitude \cite{planck18} and the presence of several large-angle statistical ``anomalies'' \cite{PlanckIsotropy:2019},\footnote{The origin of these features is still unclear and there is an ongoing discussion if they are indicating new physics or if they are merely statistical fluctuations. See \cite{DiValentino:2019} for a discussion on possible implications of the tension related to the lensing amplitude, with CMB spectra favoring a positive spatial curvature at more than the 99$\%$ confidence level. For reviews on the socalled ``$\Lambda$CDM anomalies'', see~\cite{Copi:2010na,Schwarz:2015cma,Bull:2015stt} and references therein.} the predictions of the concordance ($\Lambda$CDM) cosmological model seem to match the results of these experiments in a rather consistent way. Nonetheless, since cosmological parameters inferred from observations are constrained within the framework of the standard $\Lambda$CDM model, they cannot---strictly speaking---be interpreted as observational confirmation for the basic assumptions underlying the concordance model itself; namely, a spatially flat, homogeneous and isotropic universe described by the Friedmann-Lemaître-Robertson-Walker (FLRW) metric. In this sense, a safer approach would be to interpret these results as a good agreement between theory and data, while at the same time test for the theoretical robustness of these assumptions when compared with observations.

The $\Lambda$CDM concordance model is a phenomenological model---and, as such, always under revision---whose scope is to fulfill the primary goal in cosmology: to find the simplest model that fit all observational data. A secondary, yet important, goal in cosmology is to explore the range of models that are compatible with all data, in order to understand if the simplest model is probable \cite{bok:EllisWainwright}. A question naturally arises: are there spatially anisotropic solutions to general relativity that cannot be distinguished (dynamically) from the FLRW cosmologies?
 
The cosmological principle~\cite{bok:bondi68}, which nowadays is taken to stipulate that the Universe is spatially homogeneous and isotropic,\footnote{In its simplest version, the cosmological principle should also postulate the Universe's spatial topology \cite{Uzan:2016wji}. Usually, and this is the case here, this is taken to be the trivial topology.} is implemented at the background level. When a linear description of the Universe is appropriate---which is usually the case at scales $\gtrsim 100{\rm Mpc}$---linear stochastic fluctuations transfer these symmetries to the statistical level. To understand whether or not these principles are enforced by observational data, it is necessary to consider models that break at least some of them, and to test its observational consequences. To this end, a natural framework is provided by \emph{Bianchi models}, which are spatially homogeneous cosmological models that may break three-dimensional rotational invariance in various ways: the fundamental congruence may possess  shear~\cite{Collins:1973lda,Bunn:1996ut,Koivisto:2008ig,Hsu-Wainwright:1986,Wainwright-Hsu:1989,Hewitt-Wainwright:1993} and vorticity~\cite{hawking1969rotation,Collins:1973lda,Barrow:1985tda,int:Obukhov1992,Bunn:1996ut,Obukhov:2000jf}; the matter sector may possess anisotropic stress \cite{Maccallum:1970eg,Barrow:1997sy,LeBlanc:1997,Kanno:2008,Watanabe:2009,Calogero:2009,Yamamoto:2011,int:Thorsrud11,int:Thorsrud12,Ben18,Normann:2019_1,Normann:2019_2, Almeida:2019iqp} and tilt~\cite{Barrow:2003fc,Coley-Hervik:2005,Sundell:2015gra}; spatial sections of homogeneity may possess anisotropic curvature \cite{barrow1984helium,Hsu-Wainwright:1986,Wainwright-Hsu:1989,Hewitt-Wainwright:1993}. Note that the various types of anisotropies do not evolve  independently but are intimately connected via Einstein's field equations. For instance, the evolution equation for the shear tensor, which measures the difference between the rates of expansion in different directions, is ``sourced'' by the tensors that describe the anisotropic stress of matter and anisotropic spatial curvature.\footnote{From section \ref{sec:Conditions}, we use $\sigma_{\mu\nu}$ for the shear tensor, $\pi_{\mu\nu}$ for the anisotropic stress tensor, and what we here refer to, informally, as \emph{anisotropic spatial curvature} is the trace-less three-dimensional Ricci tensor ${^3}S_{\mu\nu}$ as defined therein.} As Bianchi models are the most general spatially homogeneous models that include open, flat and closed FLRW models as special cases, they provide a framework where the  assumptions of the standard model, as well as interesting alternative models, may be rigorously addressed.


In this paper we devote attention to a peculiar corner of Bianchi models: orthogonal models with anisotropic spatial curvature that nevertheless expands isotropically and, therefore, possess a shear-free fluid flow like standard FLRW cosmologies. Since such solutions posses a conformal Killing vector parallel to four-velocity of matter, the isotropy of the background CMB remains intact \cite{Ehlers:1966ad,Clarkson:1999yj,Obukhov:2000jf}. For the same reason, such models are parallax-free \cite{int:Hasse88}. The expansion history is similar to FLRW models, but, at the same time, there are interesting phenomenological consequences at the perturbative level, as discussed in \cite{pert:Zlosnik11,pert:Pereira12,Pereira:2015pxa,pert:Pereira17}. The possibiliy of shear-free cosmological models with anisotropic spatial curvature were first discussed by Mimoso and Crawford in \cite{main:Mimoso93}, and has been revisited several times \cite{main:Coley94,main:Coley94nr2,shear-free:Abebe16,main:Koivisto11,Menezes:2012kc}. Recall that within spatially homogeneous models containing only non-tilted perfect fluids the metric is necessarily FLRW if the fluid flow is shear-free \cite{main:Collins82}. Consequently, as pointed out in \cite{main:Mimoso93}, shear-free orthogonal models with anisotropic spatial curvature require inclusion of imperfect matter, i.e. a fluid that possess anisotropic stress. The first exact solution with a physical matter model was presented in \cite{main:Carneiro01}, realized by a massless scalar field with an isotropy-violating gradient $\nabla_\mu \varphi$ in a Locally Rotationally Symmetric (LRS) Bianchi type III universe. More recently, this solution has been proved to be the unique shear-free solution with anisotropic spatial curvature within spatially homogeneous orthogonal models that contain, in addition to a collection of standard perfect fluids, a free canonical $p$-form gauge field with $p\in\{0,1,2,3\}$ \cite{Mik18}.\footnote{The cases $p=0$ and $p=2$ are physically
equivalent via Hodge dual at the field strength $p+1$ level.} Realizations with a Maxwell field ($p=1$) have recently been found in Bianchi type III, VIII and IX by including a charged matter field (beyond the matter model considered in \cite{Mik18}) that couples to the Faraday tensor \cite{Pailas:2019}.

\subsection{Main results and interpretation}
The scope of this paper is two-fold. First, in section \ref{sec:Conditions} we review and clarify the mathematical backbone of shear-free cosmological models. We define \emph{extended FLRW} (eFLRW) models via a matter ansatz that, in addition to the conditions given in \cite{main:Mimoso93}, assumes a third condition which implies that the anisotropic cosmological model can be mapped onto a reference FLRW model with the same expansion history. Realizations with a massless scalar field are next reviewed in section \ref{sec:realizations}. The shear-free LRS Bianchi type III solution \cite{main:Carneiro01} is showed to fit into our definition of an eFLRW model, with the same expansion history as an open FLRW model. Realizations in Bianchi type VI$_0$ and the Kantowski-Sachs metric are also reviewed, in order to present eFLRW models that can be mapped onto flat and closed FLRW models, respectively. The latter two examples require, however, a pathological scalar field that violates the positive energy condition. 

The second scope and primary goal of this paper is to investigate the stability of eFLRW models within the space of more general Bianchi models. We employ the dynamical systems approach and consider all spatially homogeneous models that contain a canonical massless scalar field with a spatially homogeneous gradient $\nabla_\mu \varphi$ and a standard non-tilted perfect fluid with equation of state $p=w\rho$. This choice of matter model picks out the exact solution in LRS Bianchi type III as the unique eFLRW model. We carry out a stability analysis within LRS Bianchi type III in section \ref{ch:LRS} and within the full space of homogeneous models in section \ref{sec:arbitraryHomPert}. Remarkably, we find the equilibrium points of the eFLRW model to be attractors and thus stable. Our main results are:
\begin{enumerate}
	\item All LRS Bianchi type III universes with space-like $\nabla_\mu\varphi$ are asymptotically shear-free, for all $w\in[-1,1]$. See Theorem \ref{thm:global} for details. 
	\item All shear-free equilibrium sets with anisotropic spatial curvature are stable with respect to all homogeneous perturbations for $w\geq -1/3$. See Theorem \ref{thm:local} for details.
\end{enumerate}

Our results may well appear surprising. They can be interpreted in the following way: matter anisotropies and geometrical anisotropies mutually rearrange, dynamically, so that they counteract each other and cancel on the right-hand side of the shear-evolution equation. Hence, assymptotically, the Bianchi III model isotropizes as a Bianchi I model containing only perfect fluids. This fits into a pattern seen in other contexts in Bianchi type I models: when multiple anisotropic matter fields are included in the cosmic mix, the fields tend to rearrange dynamically so that the Universe isotropizes. In \cite{Yamamoto:2012} the considered fields were gauge field vectors non-minimally coupled to a scalar field in the context of early Universe  inflation. A ``cosmic minimum hair conjecture'' was proposed: ``the Universe organizes itself so that any feature of the spacetime during inflation becomes minimum''. We would like to point out that the underlying mechanism seems not to be inflation per se, because, in our context the same mechanism is at action with a positive deceleration parameter $q$. This is also seen in the findings of \cite{Adamek:2011}, where anisotropies from free-streaming relativistic particles and a large scale magnetic field are shown to counteract each other such that the Bianchi type I universe isotropizes, again in a non-inflationary epoch of cosmic history. We propose that the isotropization, that evidently occurs rather generically when multiple (anisotropic) degrees of freedom are present in spatially homogeneous models, can be attributed to \emph{general relativity} and its non-linearity itself. To the best of our knowledge the non-linear analysis presented here is the first one where geometrical anisotropies mix with matter anisotropies, thereby realizing essentially the same mechanism of \emph{dynamical cancellation of cosmological anisotropies} in a more general Bianchi model.

An interesting open question is if this mechanism is unique to general relativity or if it can be attributed to the non-linearity of gravitational theory in a broader sense. Note that the shear-free condition for orthogonal cosmological models given for general relativity in \cite{main:Mimoso93} extends to $f(R)$ theories \cite{shear-free:Abebe16}, and so does the linear damping result for small shear.

\paragraph{Organization}
The rest of the paper is structured as follows. In Section \ref{sec:Conditions} we lay out the mathematical backbone of the study, giving also the conditions for eFLRW cosmologies. In Section \ref{sec:realizations}, we show how these conditions are met in a particular example with a massless scalar field. The stability of the shear-free solutions is assessed in Sections \ref{ch:LRS} and \ref{sec:arbitraryHomPert}. A conclusion is provided in Section \ref{sec:Conclusion}. Finally, an explicit calculation of photon propagation in the spacetimes studied is provided in Appendix \ref{sec:PhotonProp}. 

\paragraph{Conventions}
We adopt metric signature $(-,+,+,+)$ and units in which $c=1=8\pi G$. Greek indices $(\alpha, \beta, \dots)$ run from 0 to 3, Latin indices $(a, b, \dots)$ run from 1 to 3.

\section{Conditions for extended FLRW models}
\label{sec:Conditions}
In this section we review the class of shear-free Bianchi models, define \emph{extended FLRW models} and show how such solutions can be mapped onto a reference FLRW model with the same expansion history.  

\subsection{Shear-free Bianchi metrics}
In spatially homogeneous cosmological models the invariant line-element can always be written as
\be
ds^2 = -dt^2 +  h_{ab}(t)  w^a   w^b \,,
\label{line-element:0}
\ee
where the one-forms are functions only of spatial coordinates: $w^a=e^a_{\;\;i}(x,y,z)dx^i$. In order for this to correspond to a Bianchi model, the set $\{w^a\}$ must be closed in the sense that $\mathrm d w^a = -\tfrac{1}{2} D_{bc}^{\;\;\; a} w^b \wedge w^c$, where the commutation coefficients $D_{bc}^{\;\;\; a}$ are constants that can be decomposed as\footnote{In a standard left-invariant frame the commutation constants of the basis ($D_{bc}^{\;\;\; a}$) and the structure coefficients of the Killing vectors (usually denoted by $C_{bc}^{\;\;\; a}$) are related by an overall factor of $(-1)$, see \cite{bok:MacCallum79,bok:Hervik} for details. With the minus sign on the left-hand side of (\ref{decomposition}) our conventions agree with standard definitions.}
\be
-D_{ab}^{\;\;\;\;d}  = \epsilon_{abe} N^{ed} + A_e\left( \delta^e_a \delta^d_b - \delta^e_b \delta^d_a \right)
\label{decomposition} \,,
\ee  
and where $\epsilon_{abc}$ is the totally anti-symmetric symbol ($\epsilon_{123}=1$). The Bianchi classification consists in enumerating all covectors $A_a$ and symmetric matrices $N^{ab}$ compatible with the Jacobi identity: $D_{[ab}^{\;\;\;\;\; e} \;  D_{d]e}^{\;\;\;\;\; f}  = 0$; see \cite{bok:Ellis98,bok:Hervik} for details.

In the metric approach to Bianchi models the basic variables are $h_{ab}(t)$ which are functions of time, only. It is convenient to work in the basis $w^\mu=\{dt, w^a\}$ which is dual to $e_\mu = \{ \partial_t, e_a \}$. In this non-coordinate basis the components of the metric tensor are $g_{\mu\nu}=h_{\mu\nu}-u_\mu u_\nu$ where $h_{0\nu}=0$ and $u_\mu = -\partial_\mu t$, and the Levi-Civita connection is 
\be
\Gamma^\gamma_{\;\; \alpha\beta} = \frac{1}{2} g^{\gamma \delta} \left( \partial_\beta g_{\delta \alpha} + \partial_\alpha g_{\delta\beta} - \partial_\delta g_{\alpha\beta} + D_{\delta\alpha\beta}+ D_{\delta\beta\alpha} - D_{\alpha\beta\delta} \right) \,,
\label{Levi-Civita}
\ee 
where $\partial_\mu$ is the directional derivative along the basis vector $e_\mu$.

In Bianchi models there are hypersurfaces of homogeneity, labelled $\Sigma_t$, where $t=\text{constant}$. The \emph{normal congruence} is given by the unit time-like vector field $u^\mu$ which is orthogonal to $\Sigma_t$. This family of observers is necessarily geodesic and irrotational and it follows that its covariant derivative has the following irreducible representation:
\be
\nabla_\mu u_\nu = H h_{\mu\nu} + \sigma_{\mu\nu} \,,
\label{irred}
\ee  
where the scalar $H(t)$ is the Hubble parameter and $\sigma_{\mu\nu}(t)$ is the shear tensor, which is symmetric $\sigma_{\mu\nu}=\sigma_{(\mu\nu)}$, spatial $\sigma_{\mu\nu}u^\nu=0$, and trace-free $\sigma^\mu_{\;\;\mu}=h^{\mu\nu}\sigma_{\mu\nu}=0$. In this work we are interested in the case where the normal congruence $u^\mu$ is shear-free. Setting $\sigma_{\mu\nu}=0$ in (\ref{irred}) and using (\ref{Levi-Civita}) results in 
\be
h_{ab}(t)=\left(e^{\int^t H(t') dt'} \right)^2\,\hat h_{ab} \,, 
\label{hei}
\ee
where $\hat h_{ab}$ are integration constants. At this step it is convenient to define a scale factor $a(t)$ by $H=\dot a / a$, where a dot denotes differentiation with respect to time $t$, so that (\ref{hei}) can be written $h_{ab}(t)=a^2(t) \hat h_{ab}$.  To summarize, in the case of a shear-free normal congruence the line-element (\ref{line-element:0}) reduces to 
\be
ds^2 = -dt^2 + a^2(t) \hat h_{ab}  w^a   w^b \,,
\label{line-element2}
\ee
where the constants $\hat h_{ab}$ form a symmetric and positive definite matrix.\footnote{In order to span all Bianchi geometries one must allow $\hat h_{ab}$ to have off-diagonal elements. A convenient procedure, based on the Cauchy-Schwarz inequality, to eliminate false solutions ($\hat h_{ab}$ not positive definite) was introduced in section 5.1 of \cite{Mik18}.} It is convenient to define $\det(h_{ab})=a^6(t)$ so that $\hat h_{ab}$ has five independent constant components and unity determinant. Then, the only dynamical degree of freedom is the scale factor $a(t)$, which controls distances in $\Sigma_t$. We note that the shear-free line-element (\ref{line-element2}) do not imply any restriction on the space of three-dimensional geometries, which is spanned by $\hat h_{ab}$ rather then $h_{ab}(t)$. But in the shear-free case the geometry is necessarily ``frozen'', in the sense that the expansion corresponds only to conformal transformations of hypersurfaces $\Sigma_t$.

\subsection{Energy-momentum tensor}
Next we turn the attention to the matter sector. Let us start by decomposing an arbitrary energy-momentum tensor $T_{\mu\nu}$ relative to the normal congruence $u^\mu$ \cite{bok:Ellis98}:
\be
T_{\mu\nu} =\rho u_\mu u_\nu+ p  h_{\mu\nu}+\pi_{\mu\nu}+2q_{(\mu}u_{\nu)}\,.
\label{EMT:0a}
\ee
An observer with four-velocity $u^\mu$ sees the following energy density $\rho$, pressure $p$, anisotropic stress $\pi_{\mu\nu}$, and energy flux $q^\gamma$:
\be
	\rho=u^\mu u^\nu T_{\mu\nu}\,,\;\;
	p = \frac{1}{3}h^{\mu\nu}T_{\mu\nu}\,,\;\;
	\pi_{\mu\nu} = \left( h_{(\mu}^{\;\;\;\alpha} h_{\nu)}^{\;\;\;\beta} - \frac{1}{3}h_{\mu\nu}h^{\alpha\beta} \right) T_{\alpha\beta} \,,\;\;
	q^{\gamma} = -h^{\gamma \mu} u^\nu T_{\mu\nu}\,.
\label{EMT:0b}
\ee
Note that $\pi_{\mu\nu}$ is symmetric, spatial and trace-free, and that $q^\gamma$ is spatial.

\paragraph{Perfect fluids} Firstly, we consider a collection of perfect fluids which are all non-tilted in the sense that they are comoving with the normal congruence $u^\mu$:   
\be
T_{(\ell)}^{\mu\nu} = \rho_{\ell} u^\mu u^\nu + p_{\ell} h^{\mu\nu} \,,  
\label{EMT:1}
\ee
where the index $\ell$ labels each of the perfect fluids. Since the normal congruence is shear-free by construction, each of these fields obey a shear-free fluid flow, like in the standard $\Lambda$CDM cosmology. Now, if the matter sector is restricted to these perfect fluids, the only possible cosmological models with the metric (\ref{line-element2}) are FLRW models, which are subsets found in Bianchi types I and VII$_0$ (flat), V and VII$_h$ (open) and IX (closed).

\paragraph{Imperfect fluid} In order to construct extended FLRW models we will also need an imperfect fluid, denoted by '$x$', that allows the spatial anisotropies to freeze out so that the conformal expansion remains intact, in accordance with the line element (\ref{line-element2}). We require the associated energy-momentum tensor $\tau_{\mu\nu}$ to have the following properties:
\begin{enumerate}
	\item The anisotropic stress $\pi_{\mu\nu}$, with respect to the normal congruence $u^\mu$, is identical to the anisotropic spatial curvature tensor ${^3}S_{\mu\nu}$, defined as the trace-free 3-dimensional Ricci tensor on hypersurfaces of homogeneity:
	\be
	\pi_{\mu\nu} = {^3}S_{\mu\nu}, \quad {^3}S_{\mu\nu} \equiv {^3}R_{\mu\nu} - \frac{1}{3} {^3}R h_{\mu\nu}\,.
	\label{ansatz:1}
	\ee
	\item There is no energy flux in hypersurfaces of homogeneity:
	\be
	q_\mu=0 \,.
	\label{ansatz:2}
	\ee
	\item The Raychudhuri equation is sourced exclusively by perfect fluids such that:
	\be
	\rho_x + 3p_x = 0 \,.
	\label{ansatz:3}
	\ee
\end{enumerate}
The three assumptions (\ref{ansatz:1})-(\ref{ansatz:3}) above will later be referred to as the \emph{matter ansatz} for the imperfect fluid. The first and second conditions are necessary and sufficient conditions for general relativistic solutions with the metric (\ref{line-element2}). They are equivalent to the conditions given by Mimoso and Crawford in \cite{main:Mimoso93}. The third condition implies that the solution can be mapped onto a reference FLRW model with the same background evolution history, as shown in the subsection below. In fact this ``extra'' third condition is often, depending on the details of the matter model, a direct consequence of the first one. In order to see this, note that (\ref{ansatz:1}) implies that the spatial components of $\tau_{\mu\nu}$ decay as $\pi^{\mu}_{\;\;\nu} \propto 1/a^2(t)$.\footnote{Here the components are assumed relative to an orthonormal frame.} It is therefore quite natural that all the components of $\tau^{\mu}_{\;\;\nu}$ decay uniformly as $1/a^2(t)$, which is what the third condition (\ref{ansatz:3}) assumes.\footnote{Since the normal congruence is shear-free the conservation equation takes the usual form $\dot \rho_x + 3H(\rho_x+p_x)=0$. Therefore (\ref{ansatz:3}) implies $\rho_x\propto 1/a^2(t)$ and $p_x\propto 1/a^2(t)$.} This is the case for a massless scalar field, as will be seen in section \ref{sec:realizations}.

The total energy-momentum tensor can then be written as
\be
T^{\mu\nu} = \sum_\ell T_{(\ell)}^{\mu\nu} + \tau^{\mu\nu} = \left( \rho_{x} + \sum_\ell \rho_{\ell} \right) u^\mu u^\nu + \left( -\frac{\rho_{x}}{3} + \sum_\ell p_{\ell} \right) h^{\mu\nu} + {^3}S^{\mu\nu} \,.  
\label{EMT:2}
\ee
For later reference, we write down a formal definition of an extended FLRW model:
\begin{definition}[eFLRW model]
An extended FLRW model is a general relativistic solution with a line-element that can be written on the form (\ref{line-element2}) and with a total energy-momentum tensor that can be written on the form (\ref{EMT:2}) with nonvanishing ${^3}S^{\mu\nu}$.  
\label{def:eFLRW}
\end{definition}

\subsection{Reference FLRW model}
Given the conditions above it follows that the energy density of the imperfect fluid decays in the same way as spatial curvature:  $\rho_{x}(t)\propto 1/a^2(t)$ and ${^3}R(t) \propto 1/a^2(t)$. This suggests that we introduce an effective curvature constant as \cite{Mik18}
\be
\keff \equiv \frac{a^2(t)}{6}  \cdot \left({^3}R(t) - 2\rho_{x}(t) \right)\,,  
\label{keff}
\ee  
which is constant in time (and space) for all eFLRW models. With the metric (\ref{line-element2}) and energy-momentum tensor (\ref{EMT:2}) Einstein's field equation reduces to the Friedmann and Raychudhuri equations, which take exactly the same form as in FLRW models:
\begin{align}
	&H^2 + \frac{\keff}{a^2} = \frac{\rho}{3} \,, \label{Friedmann} \\
	&\dot H + H^2 = -\frac{1}{6} (\rho+3p) \,, \label{Raychudhuri}
\end{align}
where $\keff$ is a constant and
\[
\rho = \sum_{\ell} \rho_\ell \quad \text{and} \quad p = \sum_{\ell} p_\ell  \,. 
\]
Note that the index $\ell$ runs only over perfect fluids as a consequence of the third condition (\ref{ansatz:3}) of the matter ansatz. Thus $\rho$ and $p$ represent total energy density and pressure of the collection of perfect fluids, which can be associated with standard $\Lambda$CDM matter fields like cold dark matter, radiation and ordinary matter. 

To conclude, all eFLRW models can be mapped onto a reference FLRW model with the same expansion history, $a(t)$, in the following way:\footnote{Since the domain (eFLRW models) and codomain (FLRW models) are subsets of orthogonal Bianchi models, $q_\mu=0$ for all matter fields.}
\be
(H,\;\rho_\ell,\;p_\ell,\;\rho_x,\;{^3}R,\;{^3}S_{\mu\nu}) \longmapsto (H,\;\rho_\ell,\;p_\ell,\;0,\;{^3}R-2\rho_x,\;0) \,.
\label{map:0}
\ee
We call the reference FLRW model \emph{open} if $\keff < 0$, \emph{flat} if $\keff = 0$ and \emph{closed} if $\keff > 0$. Assuming the positive energy condition $\rho_x>0$, it follows from (\ref{keff}) that a flat or closed reference model is possible only in Bianchi type IX and the Kantowski-Sachs metric because ${^3}R \le 0$ in Bianchi type I-VIII \cite{bok:Hervik}.

In FLRW models $\keff$ is of course identified with the curvature constant $k$ of the FLRW metric:
\be
ds^2 = -dt^2 + a^2(t) \left( \frac{dr^2}{1-k r^2} + r^2 (d\theta^2 + \sin^2\theta \; d\phi^2) \right)\,.
\label{metric:FLRW}
\ee
In this case ${^3}S_{\mu\nu}=0$ and consequently all matter fields are perfect fluids, in accordance with the energy-momentum tensor (\ref{EMT:2}). Conversely, if the metric (\ref{line-element2}) possess anisotropic spatial curvature (${^3}S_{\mu\nu}\neq0$), then an imperfect fluid is required. These are the solutions to which we refer as extended FLRW models. 

\section{Realizations with a scalar field}
\label{sec:realizations}
A free scalar field possess an energy-momentum tensor capable of fulfilling the properties required by the ansatz (\ref{ansatz:1})-(\ref{ansatz:3}), yet rich enough to break these conditions. It thus provides a plausible physical model which allows us to investigate the stability of these conditions without any ad-hock assumptions.

Thence, from here on the generic imperfect matter field, denoted above by $x$, is associated with $\varphi$; namely, a free massless scalar field with Lagrangian density
\be
\mathcal L_\varphi = -\frac{1}{2} \nabla_\gamma \varphi \nabla^\gamma \varphi \,,
\label{L:phi}
\ee
and energy-momentum tensor
\be
\tau_{\mu\nu} \equiv -\frac{2}{\sqrt{-g}} \frac{\delta}{\delta^{\mu\nu}}\left(\sqrt{-g} \mathcal L_\varphi \right) = \nabla_\mu \varphi \nabla_\nu \varphi -\frac{1}{2} g_{\mu\nu} \nabla_\gamma \varphi \nabla^\gamma \varphi \,.
\label{EMT:3}
\ee
In order to see how this provides the required ``imperfect fluid'', it is instructive to decompose its gradient relative to the normal congruence $u^\mu$ as:
\be
\nabla_\mu \varphi = -\vartheta u_\mu + v_\mu \,,
\ee
where $v^2=v^\gamma v_\gamma>0$ and $u^\gamma v_\gamma=0$ so that $v^\alpha$ is a spacelike vector orthogonal to $u^\alpha$. Since $\nabla_\mu \varphi$ must be spatially homogeneous we require $\vartheta=\vartheta(t)$ and $v^\alpha=v^\alpha(t)$. The energy density, pressure, energy flux and anisotropic stress, as defined by (\ref{EMT:0b}), can now be written as
\be
\rho_\varphi =\frac{1}{2} (v^2 + \vartheta^2)\,,  \quad
p_\varphi = \frac{1}{2} \left(-\frac{1}{3}v^2+\vartheta^2\right)\,,\quad
q^\mu =-\vartheta v^\mu\,, \quad 
\pi_{\mu\nu} = v_\mu v_\nu - \frac{1}{3} v^2 h_{\mu\nu} \,. 
\label{EMT:4}
\ee
If $\nabla_\mu \varphi$ is orthogonal to the normal congruence (i.e., $\vartheta=0$), it follows that $\rho_\varphi + 3p_\varphi = 0$ and $q^\mu=0$, which amounts to conditions (\ref{ansatz:2}) and (\ref{ansatz:3}). Therefore, eFLRW models can be physically realized by a free scalar field.\footnote{The realization using a 2-form gauge field in \cite{main:Koivisto11} is equivalent to a massless scalar field ($\varphi$) upon Hodge dual at the field strength level ($\nabla_\mu\varphi$) \cite{Mik18}.} 

Yet, among metrics of the type (\ref{line-element2}), the unique shear-free solution with anisotropic spatial curvature (${^3}S_{\mu\nu}\neq0$) is found in LRS Bianchi type III with trace-free matrix $N^{ab}$. This result was proved by one of us in \cite{Mik18}. A coordinate representation of this solution --- first found by Carneiro and Marug\'an \cite{main:Carneiro01} --- is given in Table \ref{tab:shear-free}. Note that $\keff<0$ so its expansion history is similar to an open FLRW model. The spatial geometry corresponds to a product between the maximally symmetric negatively curved 2-space $\mathcal{H}_2$ and the real line $\mathcal{R}_1$, where the LRS axis is associated with the latter.\footnote{In fact all Bianchi type III geometries with trace-free $N^{ab}$ are $\mathcal{H}_2\times \mathcal{R}_1$ and thus LRS spaces, see \cite{Mik18} for details. LRS Bianchi type III spaces with $N^{a}_{\;\;a}\neq 0$, on the other hand, are identical to LRS Bianchi type VIII spaces \cite{bok:MacCallum79}.}  

If one is willing to forfeit the weak energy condition by flipping the sign of the Lagrangian density (\ref{L:phi}), one can also construct extended FLRW models in Bianchi type VI$_0$ (with trace-free $N^{ab}$) and the Kantowski-Sachs metric \cite{Mik18}.\footnote{The Kantowski-Sachs metric \cite{bok:KantowskiSachs66, bok:Kantowski66} is the unique case of a spatially homogeneous spacetime in which the isometries do not admit a 3-dimensional group that acts simply transitively on hypersurfaces of homogeneity, and therefore falls outside the Bianchi classification \cite{bok:Hervik}.} See Table \ref{tab:shear-free} for coordinate representations. Here, the appearance of the imaginary unit $i$ in $\varphi$ reminds us that the weak energy condition is violated and that $\mathcal L_\varphi\rightarrow -\mathcal L_\varphi$ in (\ref{L:phi}) is needed. We stress that these solutions are included here merely to show examples of extended FLRW models that are mapped onto flat and closed FLRW models.

Figure \ref{Fig1} illustrates the dynamics of the three extended FLRW models in Table \ref{tab:shear-free}, as well as their maps onto the reference FLRW model. Here we used 
expansion normalized variables $\Omega_k$ and $\Omega_\varphi$ defined as
\be
\Omega_{k} = -\frac{{^3}R}{6H^2} \,, \qquad \Omega_{\varphi} = \frac{\rho_\varphi}{3H^2} \,.
\label{normalized:1}
\ee
These variables are subject to the constraint 
\be
\Omega_\varphi=r \Omega_k\,, \quad r=\text{constant}\,,
\label{def:r}
\ee
since both ${^3}R$ and $\rho_\varphi$ evolve proportionally to $1/a^2(t)$. 
The value of $r$ is $1/2$ for Bianchi type III and Kantowski-Sachs and $-1$ for Bianchi type VI$_0$, as follows from the information presented in Table \ref{tab:shear-free}. In all cases the time evolution is given by   
\be
(\Omega_{k})' = \frac{\rho+3p}{\rho}  (1-\Omega_{k}-r\Omega_{k})\Omega_{k} \,,
\label{ev:Okeff}
\ee
where the prime denotes differentiation with respect to $\ln a$. Note that this evolution equation also holds for FLRW models when $r=0$.
\begin{table}[h]
	\footnotesize
	\newcommand\T{\rule{0pt}{2.6ex}}
	\newcommand\B{\rule[-1.2ex]{0pt}{0pt}}
	\begin{tabular}{ l l l l l l l l l l }
		& & & & \\
		& & & & & \\
		\hline\hline \B
		Type \T \B & ref. FLRW  & $w^1$  &  $w^2$  &  $w^3$ & ${^3}R$ & $\varphi$ & $\rho_\varphi$ & $\keff$    \\ \hline \T
		\T\B III & open & $dx$ & $e^{kx}dy$ & $dz$ & $-2k^2/a^2(t)$ & $kz$ & $k^2/2a^2(t)$ & $-k^2/2$   \\			
		\T\B VI$_0$ & flat  & $e^{-kz}dx$ & $e^{kz}dy$ & $dz$ & $-2k^2/a^2(t)$ & $i\sqrt{2}kz$ & $-k^2/a^2(t)$ & $0$    \\				
		\T\B K.S. & closed & $dx$ & $k^{-1} \sin(k x)dy$ & $dz$ & $2k^2/a^2(t)$ & $ikz$ & $-k^2/2a^2(t)$ & $k^2/2$    \\		
		\hline\hline
	\end{tabular}
	\caption{Coordinate representation of eFLRW solutions in LRS Bianchi type III ($N^a_{\;\;a}=0$), pseudo-LRS Bianchi type VI$_0$ ($N^a_{\;\;a}=0$) and the Kantowski-Sachs metric. The invariant line-element (\ref{line-element2}) is given by the basis one-forms $w^a$ and the matrix $\hat h_{ab}=\text{diag}(1,1,1)$. \newline} 	
	\label{tab:shear-free}
\end{table}
\newline
\begin{figure}[h]
	\centering	
	\begin{overpic}[width=0.9\textwidth,tics=10]{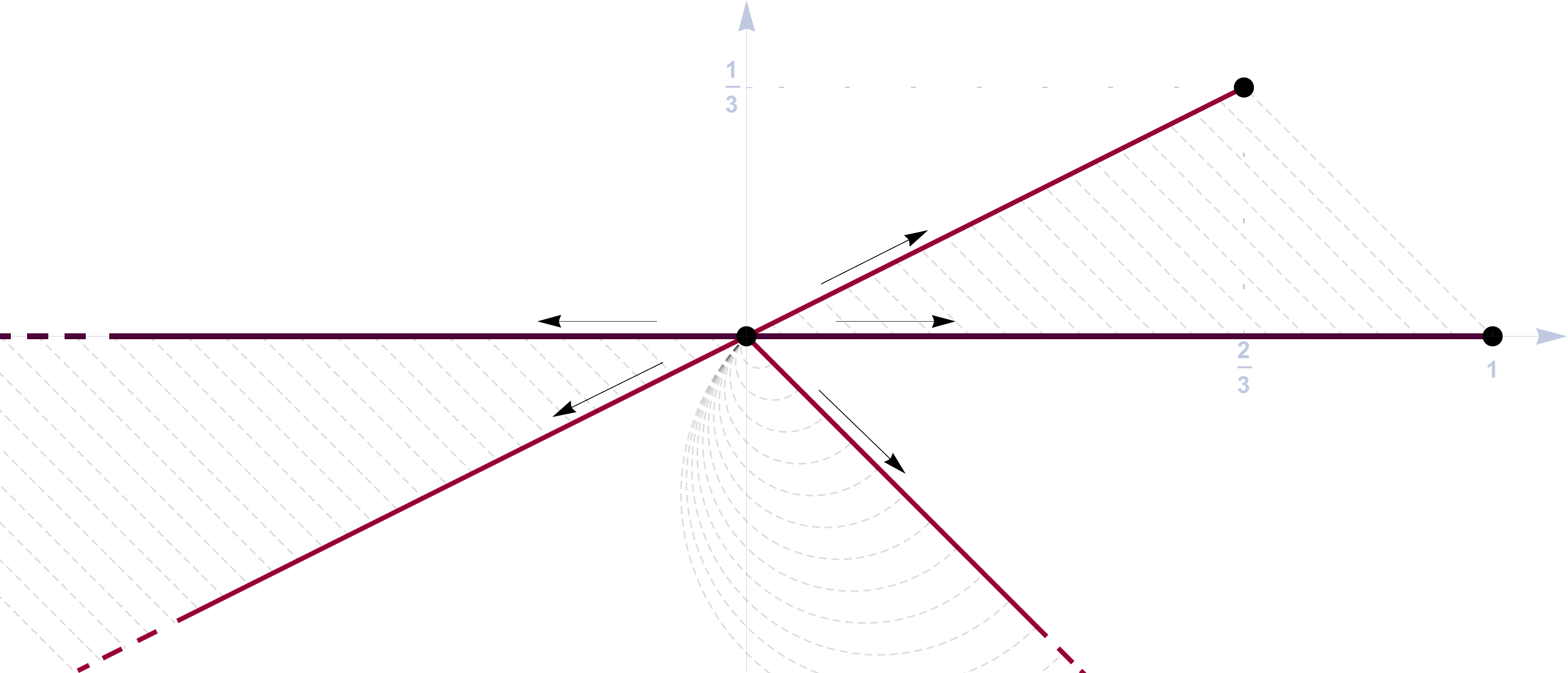}		
		\put (46.7,44) {\footnotesize \textcolor{coordcolor}{$\displaystyle \Omega_{\varphi}$}}
		\put (101,21) {\footnotesize \textcolor{coordcolor}{ $\displaystyle \Omega_k$}}
		\put (67.1,22.2) {\scriptsize \textcolor{FLRWcolor}{Open FLRW}}
		\put (14.9,22.2) {\scriptsize \textcolor{FLRWcolor}{Closed FLRW}}
		\put (64.5,31) {\scriptsize \textcolor{sfcolor}{\rotatebox{26}{III sol.}}}
		\put (60,10) {\scriptsize \textcolor{sfcolor}{\rotatebox{-45}{VI$_0$ sol.}}}
		\put (17,7.3) {\scriptsize \textcolor{sfcolor}{\rotatebox{26}{K.S. sol.}}}
		\put (93.2,23) { \scriptsize $\displaystyle\mathcal{P}_{\rm M}$ }
		\put (45.5,23) { \scriptsize$\displaystyle\mathcal{P}_{\rm F}$}
		\put (77.38,39) { \scriptsize $\displaystyle\mathcal{P}_{\rm SFA}$ }
		\put (1,39) { \scriptsize $\displaystyle \rho + 3p>0$ }
	\end{overpic}	
	\caption{Illustration of three eFLRW solutions: LRS Bianchi type III, Bianchi type VI$_0$ and Kantowski-Sachs. The map (\ref{map}) onto the reference FLRW model are given by gray dashed lines. The arrows show the direction of evolution when $\rho+3p>0$; they point towards $\mathcal P_\text{F}$ (flat FLRW) when $\rho+3p<0$.} 
	\label{Fig1}
\end{figure}

As in FLRW models, the direction of evolution is determined by the sign of the combination $\rho+3p$, where we recall that $\rho$ and $p$ are sums exclusively over the perfect fluids. Figure \ref{Fig1} also illustrates the map onto the reference FLRW model, given by
\be
(\Omega_k\,,\; \Omega_\varphi) \longmapsto (\Omega_k+\Omega_\varphi\,,\; 0) \,,
\label{map}
\ee
and in accordance with (\ref{map:0}). We thus see that LRS Bianchi type III, Bianchi type VI$_0$ and the Kantowski-Sachs metrics are mapped onto open, flat and closed reference FLRW models, respectively. Also note from the figure that the positive energy condition $\Omega_\varphi>0$ is realized only in the Bianchi type III model, as discussed above. In this case the eFLRW model evolves in the direction toward the point $\mathcal P_\text{SFA}$ (later referred to as the ``shear-free attractor'' for reasons established in section \ref{ch:LRS} and \ref{sec:arbitraryHomPert}) when $\rho+3p>0$ and toward the point $\mathcal P_\text{F}$  (flat FLRW) when $\rho+3p<0$. In particular note that $\mathcal P_\text{SFA}$ is mapped onto the Milne solution $\mathcal P_\text{M}$, which is the spacetime (\ref{metric:FLRW}) with $k=-1$ and $a(t)=t$.

\section{Stability of shear-free solutions \label{ch:LRS}}
We have seen that extended FLRW models can be realized by a scalar field $\varphi$. This requires that its energy-momentum tensor satisfies the ingredients (\ref{ansatz:1})-(\ref{ansatz:3}). In this section we address the issue of initial conditions by employing the dynamical systems approach. We shall restrict the attention to LRS models and focus exclusively on Bianchi type III, since this is the unique spacetime that admits an extended FLRW model realized by a scalar field respecting the weak energy condition.

A general LRS Bianchi type III spacetime can be described by the metric:
\be
ds^2 = -dt^2 +  e^{2\alpha(t)} \left[ e^{2\beta(t)}(dx^2 + e^{2kx} dy^2) + e^{-4\beta(t)} dz^2 \right]\,.
\label{LRSBIII}
\ee
In this spacetime Einstein's equation and the Klein-Gordon equation for the scalar field allow only two possibilities: $\varphi=\varphi(t)$ or $\varphi=\Phi z$, where $\Phi=\text{constant}$. As we have shown, the first case corresponds to a perfect fluid and thus it cannot realize an extended FLRW model. We are thus interested in the latter case in which $\nabla_\mu \varphi$ is spatially homogeneous, orthogonal to the normal congruence and aligned with the LRS axis. It follows from (\ref{EMT:4}) that there is zero energy flux in the hypersurfaces of homogeneity ($q_\mu=0$). Consequently, in LRS Bianchi type III with a scalar field, the three conditions (\ref{ansatz:1})-(\ref{ansatz:3}) of the matter ansatz reduce to a single relation between the spatial curvature ${^3}R$ and the energy density of the scalar field $\rho_\varphi$, as we shall verify in a moment. 

As for the collection of perfect fluids, it is at this step more tractable to consider a single component. We assume it obeys a linear barotropic equation of state 
\be
p=w\rho \,,
\ee
where $w\in[-1,1]$ is a constant. It is convenient to express Einstein's equations in terms of three dimensional curvature ${^3}R=-2k^2e^{-2\alpha-2\beta}$ and the energy density of the scalar field, $\rho_\varphi = \tfrac{1}{2}\Phi^2 e^{-2\alpha + 4\beta}$:
\begin{align}
H^2 -\sigma^2+\frac{{^3}R}{6} &= \frac{\rho+\rho_\varphi}{3}\,, \label{LRSBIII:H}\\
\dot H + H^2 &= -2\sigma^2 -\frac{\rho}{6}  (1+3w)\,,\\
\dot \sigma + 3H\sigma &= -\frac{1}{6}( {^3}R+4\rho_\varphi )\,,
\label{LRSBIII:shear} 
\end{align}
where $H=\dot\alpha$ is the Hubble parameter and $\sigma=\dot\beta$ is the shear variable. Observe from the shear evolution, equation (\ref{LRSBIII:shear}), that  
\be
{^3}R+4\rho_\varphi = 0
\label{LRSBIII:shearfree}
\ee
is a sufficient condition for the shear to remain zero. 

\subsection{Dynamical system \label{sub:DS:LRS}}
The extended FLRW model corresponds to two conditions: (\ref{LRSBIII:shearfree}) and $\sigma=0$. This represents an invariant subset under time evolution. In order to investigate the stability of these conditions we start by introducing Hubble normalized variables according to 
\be
\Sigma=\frac{\sigma}{H} \,, \quad \Omega_{k} = -\frac{{^3}R}{6H^2} \,, \quad \Omega_{\varphi} = \frac{\rho_\varphi}{3H^2} \,, \quad \Omega = \frac{\rho}{3H^2} \,.
\label{normalized}
\ee
According to the Einstein equation, these variables satisfy the constraint
\be
\Sigma^2 + \Omega_k + \Omega_\varphi + \Omega = 1 \,.
\label{Friedmann:2}
\ee
This eliminates the Hubble parameter $H$ from the dynamical system, so that its equilibrium points correspond to non-stationary universes. Switching to the scale $\alpha$ as the time parameter and eliminating $\Omega$ by means of (\ref{Friedmann:2}), we obtain the following autonomous system:
\begin{align}
	\Sigma' &= (q-2)\Sigma + \Omega_k - 2\Omega_\varphi \,, \label{DS:S} \\
	\Omega_k' &= 2\Omega_k\left( q-\Sigma \right) \,, \label{DS:k} \\
	\Omega_\varphi' &= 2\Omega_\varphi \left( q+2\Sigma \right) \,, \label{DS:p}
\end{align}
where a prime denotes differentiation with respect to $\alpha$ and
\be
q\equiv-1-\frac{\dot H}{H^2}=2\Sigma^2+\frac{1}{2}(1+3w)(1-\Sigma^2 - \Omega_k - \Omega_\varphi)
\label{q}
\ee
is the deceleration parameter. These variables are subject to the following constraints:
\be
\Sigma^2+\Omega_k+\Omega_\varphi \le 1 \,, \quad 0 \le \Omega_k  \,, \quad 0 \le \Omega_\varphi \,,
\label{existence}
\ee
which are required by $\rho\ge0$, ${^3}R\le 0$ and $\rho_\varphi\ge0$, respectively. State space is thus closed and bounded. Note that $\Omega_k>0$ corresponds to LRS Bianchi type III and $\Omega_k=0$ to flat FLRW universe. 

\subsection{Shear-free attractors and global stability}
Coordinates of state space $\mathbb R^3$ will be referenced relative to the basis 
\be
\{ \Sigma\,, \Omega_k\,, \Omega_\varphi \} \,.
\label{statespace}
\ee
Shear-free solutions lay along the parametrized curve\footnote{The parameter $\lambda$ is squared in order to match the variables used in section \ref{sec:arbitraryHomPert}; see (\ref{topdown}) and (\ref{eq:PSFA}).} 
\be
\mathcal P_\text{SF}(\lambda) = \left( 0\,, \frac{2\lambda^2}{3}\,, \frac{\lambda^2}{3} \right) \,, \quad \lambda\in[0,1]\,.
\label{curve:3D}
\ee
The underlying geometry corresponds to Bianchi type I for $\lambda=0$ and LRS Bianchi type III for $\lambda \in (0,1]$. We will refer to $\mathcal P_\text{SF}(\lambda)$ as the \emph{shear-free curve}, which is the union of flat FLRW ($\lambda=0$) and extended FLRW ($\lambda\in(0,1]$).

It is easy to check that $\mathcal P_\text{SF}(\lambda)$ is an invariant subset of state space. It evolves according to 
\be
\lambda' = \frac{1}{2}(1+3w)\lambda(1-\lambda^2) \,.
\label{ev:lambda}
\ee
Note that $\mathcal P_\text{SF}(\lambda)$ is a one-parameter family of equilibrium points when the perfect fluid has equation of state $w=-1/3$. For $w\neq -1/3$ there are two shear-free equilibrium points. The first point corresponds to parameter values $\lambda=0$, flat FLRW, and will be denoted by     
\be
\mathcal P_\text{F} = \mathcal P_\text{SF}(0) \,.
\ee
The second point corresponds to parameter value $\lambda=1$ and will be denoted by
\be
\mathcal P_\text{SFA} = \mathcal P_\text{SF}(1) \,.
\ee
We will refer to it as the \emph{shear-free attractor} (SFA) since it is globally stable for $w>-1/3$, as the analysis below will show. 

Remarkably, all universes described by the dynamical system (\ref{DS:S})-(\ref{DS:p}) with $\Omega_\varphi>0$ are asymptotically shear-free and therefore approaches a point on $\mathcal P_\text{SF}(\lambda)$ for any equation of state parameter $w\in[-1,1]$. This follows by noticing that the system (\ref{DS:S})-(\ref{DS:p}) possesses a monotonic function:
\be
\mathcal{M}' = 6q\mathcal{M}\,,\qquad \mathcal{M} \equiv \Omega_k^2 \Omega_\varphi\,.
\ee
The following theorem results:
\begin{theorem}
	All LRS Bianchi type III universes containing a scalar field with Lagrangian density $-\tfrac{1}{2}\nabla_\mu \varphi \nabla^\mu \varphi<0$ and a non-tilted perfect fluid with a linear barotropic equation of state, $p=w \rho$, are asymptotically shear-free. Specifically:
	\begin{enumerate}
		\item[a)] The shear-free attractor $\mathcal P_\text{SFA} = \mathcal P_\text{SF}(1)$ is a global attractor for $w>-\tfrac{1}{3}$. 
		\item[b)] The shear-free curve $\mathcal P_\text{SF}(\lambda)$ is a global attractor for $w=-\tfrac{1}{3}$. 
		\item[c)] The flat FLRW solution $\mathcal P_\text{F}=\mathcal P_\text{SF}(0)$ is a global attractor for $w<-\tfrac{1}{3}$. 
	\end{enumerate}   
	\label{thm:global}
\end{theorem}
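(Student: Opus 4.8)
The plan is to prove all three cases with the monotonicity principle on the compact state space cut out by (\ref{existence}), using the monotone function $\mathcal M = \Omega_k^2\Omega_\varphi$ supplied just before the theorem. First I would record the sign of the deceleration parameter (\ref{q}): writing $\Omega = 1 - \Sigma^2 - \Omega_k - \Omega_\varphi \ge 0$ for the perfect-fluid density, (\ref{q}) reads $q = 2\Sigma^2 + \tfrac12(1+3w)\Omega$, so for $w \ge -1/3$ one has $q \ge 0$ everywhere, vanishing only on $\{\Sigma = 0\}\cap\{\Omega = 0\}$. I would also note that the faces $\Omega_k = 0$, $\Omega_\varphi = 0$ and $\Omega = 0$ are each invariant, so an orbit with $\Omega_k,\Omega_\varphi>0$ (Bianchi type III with scalar field) keeps $\mathcal M > 0$.

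For parts (a) and (b), since $\mathcal M' = 6q\mathcal M \ge 0$ when $w \ge -1/3$ and $\mathcal M$ is bounded and strictly positive along an interior orbit, $\mathcal M$ increases to a positive limit, so $\mathcal M' \to 0$ and hence $q \to 0$ along the orbit; the $\omega$-limit set is then a compact invariant subset of $\{q = 0\}$. For $w > -1/3$ this set is $\{\Sigma = 0,\ \Omega_k + \Omega_\varphi = 1\}$, and invariance additionally forces $\Sigma' = \Omega_k - 2\Omega_\varphi = 0$ there (otherwise the orbit leaves $\Sigma = 0$ and $q$ becomes strictly positive), pinning $\Omega_k = 2\Omega_\varphi = 2/3$, i.e.\ the point $\mathcal P_\text{SFA}$; this is (a). For $w = -1/3$ one has $q = 2\Sigma^2$, so $\Sigma \to 0$ and the $\omega$-limit lies in $\{\Sigma = 0\}$, where invariance again gives $\Omega_k = 2\Omega_\varphi$; this is exactly the shear-free curve (\ref{curve:3D}), every point of which is an equilibrium for this $w$, proving (b).

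For part (c), $w < -1/3$ makes $q$ sign-indefinite and $\mathcal M$ non-monotone, so I would instead use $V \equiv 1 - \Omega = \Sigma^2 + \Omega_k + \Omega_\varphi$. From $\Omega' = \Omega\big(2q - (1+3w)\big)$ and the rewriting of $q$ above, $V' = \Omega\big[(1+3w)V - 4\Sigma^2\big]$, which is $\le 0$ for $w < -1/3$ and strictly negative whenever $\Omega > 0$ and $(\Sigma,\Omega_k,\Omega_\varphi)\neq 0$. Thus $V$ is a strict Lyapunov function with unique zero at $\mathcal P_\text{F}=(0,0,0)$; as $V$ decreases from $V(x)<1$ the orbit stays in $\{\Omega \ge 1 - V(x) > 0\}$ and cannot accumulate on the vacuum face $\{\Omega = 0\}$, so LaSalle gives $\omega(x)=\{\mathcal P_\text{F}\}$, which is (c). (Equivalently, $\Omega_k^2\Omega_\varphi/\Omega^3$ satisfies $(\,\cdot\,)' = 3(1+3w)(\,\cdot\,)$ and is strictly decreasing for $w < -1/3$.)

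The main obstacle is the passage from ``$q \to 0$'' (resp.\ ``$V \to \min$'') to the exact identification of the $\omega$-limit set: one must invoke the monotonicity/LaSalle principle correctly and compute the largest invariant subset of $\{q=0\}$, in particular checking that an orbit instantaneously in $\{q = 0\}$ with $\Omega_k \neq 2\Omega_\varphi$ genuinely leaves it. By contrast, verifying $\mathcal M' = 6q\mathcal M$, the identity for $q$, and the sign of $V'$ are routine. If desired, I would corroborate the conclusions by linearizing (\ref{DS:S})--(\ref{DS:p}) at $\mathcal P_\text{SFA}$ and $\mathcal P_\text{F}$ to exhibit them as hyperbolic sinks in their respective ranges of $w$, though the global monotone-function argument already settles the theorem.
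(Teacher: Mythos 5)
Your proof is correct, and for parts (a) and (b) it follows essentially the paper's route: the same monotone function $\mathcal{M}=\Omega_k^2\Omega_\varphi$ with $\mathcal{M}'=6q\mathcal{M}$, compactness of the state space forcing $q\rightarrow 0$, then $\Sigma\rightarrow0$ and $\Omega\rightarrow0$ (for $w>-1/3$) read off from (\ref{q}), and finally $\Omega_k-2\Omega_\varphi\rightarrow0$ from (\ref{DS:S}). Your explicit invariance step --- that a point of the $\omega$-limit set with $\Sigma=0$ but $\Omega_k\neq2\Omega_\varphi$ would be carried off the set $\{\Sigma=0\}$ --- is exactly what the paper compresses into ``Thence, $\Omega_k - 2\Omega_\varphi \rightarrow 0$ \dots according to (\ref{DS:S})'', so you are if anything more careful; note only that the passage from $\mathcal{M}$ converging to ``$\mathcal{M}'\rightarrow0$'' tacitly uses a Barbalat/uniform-continuity argument (harmless here since the flow is smooth on a compact set, and the paper makes the same leap). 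Where you genuinely diverge is part (c): the paper simply invokes the no-hair theorem of \cite{Ben18}, whereas you construct a self-contained Lyapunov argument from $V=1-\Omega$. I have verified your identities: $\Omega'=\Omega\bigl(2q-(1+3w)\bigr)$ together with $q=2\Sigma^2+\tfrac12(1+3w)\Omega$ indeed gives $V'=\Omega\bigl[(1+3w)V-4\Sigma^2\bigr]$, the sublevel-set trapping $\Omega\geq1-V(x_0)>0$ correctly prevents accumulation on the vacuum face (where, incidentally, orbits converge to $\mathcal P_\text{SFA}$ rather than $\mathcal P_\text{F}$, so your restriction $V(x_0)<1$ is not merely technical --- it matches the theorem's hypothesis that a perfect fluid is present), and LaSalle then pins $\omega(x)=\{\mathcal P_\text{F}\}$; your alternative quantity $\Omega_k^2\Omega_\varphi/\Omega^3$ with growth rate $3(1+3w)$ also checks out. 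Your version of (c) buys independence from the external reference at the cost of a slightly longer argument. One ingredient of the paper's proof you omit and should state: the theorem is phrased about LRS Bianchi type III \emph{spacetimes} with space-like $\nabla_\mu\varphi$, and the reduction to the system (\ref{DS:S})--(\ref{DS:p}) with $\Omega_\varphi>0$ requires the fact (cited in the paper from \cite{Mik19}, and consistent with the dichotomy $\varphi=\varphi(t)$ or $\varphi=\Phi z$ established in section \ref{ch:LRS}) that the space-like assumption plus Einstein's equations force $u^\mu\nabla_\mu\varphi=0$; without that sentence your argument proves a statement about the dynamical system rather than the theorem as stated.
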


\begin{proof}
	In Bianchi type III the assumption of a space-like gradient $-\tfrac{1}{2}\nabla_\mu \varphi \nabla^\mu \varphi<0$ together with Einstein equations imply $u^\mu \nabla_\mu \varphi = 0$ \cite{Mik19}, where $u^\mu$ is the normal congruence. Therefore the dynamical system (\ref{DS:S})-(\ref{DS:p}) with $\Omega_\varphi>0$ applies to the assumptions stated in the theorem. It follows from (\ref{q}) and (\ref{existence}) that $\mathcal{M}$ is monotonically increasing for $w\ge-1/3$. Since it is also bounded (state space is compact) it follows that $q\rightarrow 0$ at late times. Then, according to (\ref{q}), $\Sigma\rightarrow 0$, $\Omega \rightarrow 0$ at late times for $w > -1/3$ and $\Sigma\rightarrow 0$ at late times for $w=-1/3$. Thence, $\Omega_k - 2\Omega_\varphi \rightarrow 0$ for $w\ge -1/3$ according to (\ref{DS:S}). This is summarized by the global results stated in the theorem for $w \ge -1/3$. For $w<-1/3$ the no-hair theorem proved for this class of models in \cite{Ben18} applies. 
\end{proof}
The corresponding self-similar solutions, that represent the asymptotic future limit, are 
\be
    ds^2 = -dt^2 + \frac{t^2}{2}\left[ dx^2 + e^{2x} dy^2 + dz^2 \right]\,, \quad \varphi = z\,, \quad  \rho = 0 \,,
\label{self-similar:1}
\ee
for $w>-1/3$; 
\be
    ds^2 = -dt^2 + \frac{t^2}{2}\left[ dx^2 + e^{2\lambda x}dy^2 + dz^2 \right]\,, \quad \varphi = \lambda z\,, \quad \rho = \frac{3}{t^2}(1-\lambda^2) \,,
\label{self-similar:2}
\ee
with constant parameter $\lambda \in [0,1]$ for $w=-1/3$; and 
\be
    ds^2 = -dt^2 + t^{\tfrac{4}{3(1+w)}}  \left[ dx^2 + dy^2 + dz^2 \right]\,, \quad \varphi = 0\,, \quad \rho = \frac{4}{3(1+w)^2 t^2} \,,
\label{self-similar:3}
\ee
for $w<-1/3$.
Here non-essential parameters are omitted, such as the constant $k$ in table \ref{tab:shear-free} that can be set to unity by a coordinate transformation.\footnote{Initial conditions for ${^3}R$ are then adjusted by fixing the parameter $t$ at that instant.}

\section{Arbitrary homogeneous perturbations \label{sec:arbitraryHomPert}}

Above we considered LRS Bianchi type III and established $\mathcal P_\text{SFA}=\mathcal P_\text{SF}(1)$ as a global attractor for $w>-1/3$ and $\mathcal P_\text{SF}(\lambda)$ as a global attractor for $w=-1/3$. In this section the goal is to establish their stability, locally, with respect to all homogeneous perturbations within the space of spatially homogeneous cosmological models that contain a massless scalar field and a non-tilted perfect fluid (obeying again $p=w\rho$). By a 'homogeneous perturbation', we mean any infinitesimal change ($\delta X$) in spatial curvature, the shear tensor, the scalar field and/or the perfect fluid allowed by the constraint equations of the dynamical system. That is, if $X$ is a state vector in a spatially homogeneous model, then so is $X+\delta X$. The main result is Theorem \ref{thm:local}, which we illustrate in Figure \ref{Fig2}.


This section builds on the results of \cite{Ben18} and \cite{Mik19}, where the orthonormal frame formalism \cite{bok:EllisWainwright} was employed to investigate the structure of the space of Bianchi type I-VII$_h$ cosmological models with our considered matter sector. In what follows we adopt the notation and definitions of \cite{Mik19}. 

Consider a dynamical system where the basic variables are the expansion normalized quantities: 
\be
N_{ab}=\frac{n_{ab}}{H}\,, \quad A_a=\frac{a_a}{H}\,, \quad \Sigma_{ab} = \frac{\sigma_{ab}}{H}\,, \quad X_\mu = \frac{\nabla_\mu \varphi}{\sqrt{6}H}\,, \quad \Omega=\frac{\rho}{3H^2} \,. 
\ee
Let $d$ denote the dimension the system, after the orientation of the spatial frame has been fixed uniquely.\footnote{See section 3.3.2 of \cite{Mik19} for details.} For the discussion we need the  following sets which are invariant under time evolution:
\begin{enumerate}
	\item $\mathcal H$: all spatially homogeneous cosmological models containing a massless scalar field with a spatially homogeneous gradient $\nabla_\mu \varphi$ and a non-tilted perfect fluid with equation of state $p(t)=w\cdot\rho(t)$. This includes Bianchi types VIII and IX and the Kantowski-Sachs metric, in addition to the models in $\mathcal D (\text{I-VII}_h)$.	
	\item $\mathcal D (\text{I-VII}_h)$: the Bianchi types I-VII$_h$ subset of $\mathcal H$. A general representation based on a $1+1+2$ decomposition of Einstein’s field equation was given in \cite{Ben18}, and a representation with the orientation of the orthonormal frame fixed relative to matter anisotropies and geometrical anisotropies in \cite{Mik19}.  
	\item $\mathcal D^+(\text{III})$: Bianchi type III with an isotropy-violating gradient $\nabla_\mu \varphi$ that has a non-vanishing component along the tangent vectors of the $G_2$ subgroup of isometries. The defining conditions are given in \cite{Mik19}. Dimension: $d=7$.
	\item $\mathcal S^+(\text{III})$: LRS Bianchi type III with an isotropy-violating gradient $\nabla_\mu \varphi$ that is aligned with the LRS axis. This is the most general LRS subset of Bianchi type III with isotropy-violating $\nabla_\mu \varphi$. A representation is given by the dynamical system in section \ref{sub:DS:LRS}. Dimension: $d=3$.
	\item $\mathcal S^+_\text{SF}(\text{III})$: shear-free, Bianchi type III. A representation is the shear-free curve $\mathcal P_\text{SF}(\lambda)$ (with $\lambda>0$) in section \ref{ch:LRS}.  These are the only shear-free solutions with anisotropic spatial curvature within $\mathcal H$ \cite{Mik18}. Dimension: $d=1$.
	\item $\mathcal P_\text{SFA}$: the equilibrium point in $\mathcal S^+_\text{SF}(\text{III})$ with $\Omega=0$ and dynamically equivalent to Milne. Proved to be a global attractor within $\mathcal S^+(\text{III})$ for $w>-1/3$ in section \ref{ch:LRS}.   
\end{enumerate} 
The hierarchy of these sets is as follows:
\be
\mathcal P_\text{SFA} \in \mathcal S^+_\text{SF}(\text{III})  \subset \mathcal S^+(\text{III})   \subset  \mathcal D^+(\text{III})  \subset \mathcal D (\text{I-VII}_h)  \subset \mathcal H \,.
\ee

In terms of this classification, the part of Theorem \ref{thm:global} with $w>-1/3$ can be expressed in the following way:
\be
\lim_{\tau\rightarrow\infty} X = \mathcal P_\text{SFA} \text{\; for all \;} X\in S^+(\text{III})\,.
\ee
Of course, this implies the much weaker statement that $\mathcal P_\text{SFA}$ is a \emph{local} attractor within $\mathcal{S}^+(\text{III})$ \cite{Pereira:2016tmu}. But as our global result is restricted to LRS Bianchi type III, it is clear that a number of questions regarding the local stability remain open.  

\subsection{Neighborhood of the shear-free attractor}
First of all, one have the questions about what type of physical perturbations are possible in general relativity: 
\begin{itemize}
	\item \emph{Can $\mathcal P_\text{SFA}$ be perturbed into a non-LRS Bianchi model?}
	\newline Note that the LRS Bianchi type III is very special configuration within the space of Bianchi type III models. Specifically, $\mathcal S^+(\text{III}) \subset \mathcal D^+(\text{III})$, where $\mathcal S^+(\text{III})$ has dimension $d=3$ and $\mathcal D^+(\text{III})$ has dimension $d=7$. 
	\item \emph{Can $\mathcal P_\text{SFA}$ be perturbed into models beyond Bianchi type III?} \newline This question is motivated by the nontrivial way in which the Bianchi models are connected to each others. First, models of more special Bianchi type represent boundraries of models of more general Bianchi type. Second, the Bianchi classification is not a resolution to the equivalence problem. For instance LRS Bianchi type III models with $n^a_{\;\;a}\neq0$ are equivalent to LRS Bianchi type VIII models \cite{bok:MacCallum79}. 
\end{itemize}
Clearly, these questions must be answered before one can establish the stability with respect to all homogeneous perturbations.  

Lemma 5.1 in \cite{Mik19} provides a partial answer to these questions by stating that if $X\in \mathcal S^+_\text{SF}(\text{III})$ then $X+\delta X \in \mathcal D^+(\text{III})$ for all homogeneous perturbations $\delta X$. This statement was proved within the set $\mathcal D (\text{I-VII}_h)$ by using the constraint equations of the dynamical system explicitly. In order to prove a stronger version valid within $\mathcal H$ we need to consider Bianchi type VIII, Bianchi type IX and the Kantowski-Sachs metric. Since all state vectors $X\in \mathcal S^+_\text{SF}(\text{III})$ belong to LRS Bianchi type III it is geometrically disconnected from the Kantowski-Sachs metric (consider the metric (\ref{metric:LRSIII&KS})). As for Bianchi type VIII, which has LRS subsets equivalent to LRS Bianchi type III models with $n^a_{\;\;a}\neq 0$ \cite{bok:MacCallum79}, it is easier to consider the matter sector (the same argument applies to Bianchi type IX).  In \cite{Ben18} it was shown using the Klein-Gordon equation that $\nabla_\mu \varphi$ must be parallel to the normal congruence in Bianchi type VIII and IX model, i.e. the matter sector is perfect. But any state vector $X\in \mathcal S^+_\text{SF}(\text{III})$ breaks the isotropy of the matter sector by a finite non-zero spatial component of $\nabla_\mu \varphi$ and is thus disconnected from Bianchi type VIII and IX models. We have thus proved the following result:
\begin{lemma}
	Let $X\in \mathcal{S}^+_\text{SF}({\rm III})$. Then $X+\delta X \in \mathcal D^+({\rm III})$ for all spatially homogeneous perturbations $\delta X$.
	\label{lemma:neighborhood}
\end{lemma}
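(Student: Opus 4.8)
The plan is to reduce the statement to the already-established case inside the Bianchi I-VII$_h$ subset and then close the remaining gaps by excluding the three families of $\mathcal{H}$ that lie outside that subset: Bianchi types VIII, IX and the Kantowski-Sachs metric. First I would invoke Lemma 5.1 of \cite{Mik19}, which asserts precisely that $X\in\mathcal{S}^+_\text{SF}(\text{III})$ implies $X+\delta X\in\mathcal{D}^+(\text{III})$ for every homogeneous perturbation that keeps the model inside $\mathcal{D}(\text{I-VII}_h)$, proved there by explicit use of the constraint equations of the dynamical system. The task then collapses to showing that no admissible $\delta X$ can carry a shear-free Bianchi type III state into the complementary part of $\mathcal{H}$.

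For the Kantowski-Sachs metric I would argue by geometric disconnection. Every $X\in\mathcal{S}^+_\text{SF}(\text{III})$ is an LRS Bianchi type III configuration, whose spatial sections are $\mathcal{H}_2\times\mathcal{R}_1$ with a \emph{negatively} curved two-surface, whereas the Kantowski-Sachs geometry carries a \emph{positively} curved (spherical) two-surface. Comparing the two in the common LRS parametrization, referenced as the metric~(\ref{metric:LRSIII&KS}), shows that passing from one to the other requires a change of sign in a curvature parameter that cannot be achieved continuously; hence no infinitesimal $\delta X$ reaches Kantowski-Sachs.

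For Bianchi types VIII and IX the cleaner route is through the matter sector rather than the geometry, since these types possess LRS subsets that are metrically equivalent to LRS Bianchi type III with $n^a_{\;\;a}\neq 0$ \cite{bok:MacCallum79}, so a purely geometric separation is unavailable. Here I would invoke the result of \cite{Ben18} that the Klein-Gordon equation forces $\nabla_\mu\varphi$ to be parallel to the normal congruence in Bianchi types VIII and IX, so that the scalar field necessarily reduces to a perfect fluid with vanishing spatial gradient, $v^\mu=0$. By contrast, any $X\in\mathcal{S}^+_\text{SF}(\text{III})$ lies on the curve $\mathcal{P}_\text{SF}(\lambda)$ with $\lambda>0$, hence has $\Omega_\varphi>0$ and therefore a finite, nonzero spacelike component $v^\mu$ of the scalar gradient. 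This produces a finite gap in the matter configuration that no infinitesimal perturbation can bridge, ruling out types VIII and IX; collecting the three exclusions with Lemma 5.1 of \cite{Mik19} then yields $X+\delta X\in\mathcal{D}^+(\text{III})$.

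The main obstacle I anticipate is not any single computation but the care required in the two disconnection arguments: one must verify that the relevant obstructions---the curvature sign for Kantowski-Sachs and the norm $v^2$ of the spatial scalar gradient for types VIII and IX---are genuinely bounded away from zero on all of $\mathcal{S}^+_\text{SF}(\text{III})$, so that neither can be overcome at first order. Because more special Bianchi types sit on the boundaries of more general ones, the subtle point is that a naive dimension count does not by itself forbid a perturbation from crossing type; it is the finiteness of these obstructions that supplies the rigorous separation and completes the proof.
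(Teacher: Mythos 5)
Your proposal is correct and takes essentially the same route as the paper's own proof: reduce to Lemma~5.1 of \cite{Mik19} inside $\mathcal D(\text{I-VII}_h)$, exclude the Kantowski--Sachs metric by the sign of the curvature parameter in the common LRS line element (\ref{metric:LRSIII&KS}), and exclude Bianchi types VIII and IX via the Klein--Gordon result of \cite{Ben18} forcing $\nabla_\mu\varphi$ parallel to the normal congruence, against the finite spacelike gradient on $\mathcal S^+_\text{SF}(\text{III})$. Your closing caution about uniform bounds is unnecessary for the lemma as stated, since it is pointwise in $X$ and each $X$ with $\lambda>0$ has strictly positive $N_\times$ and $V_3$, which no infinitesimal $\delta X$ can flip.
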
  
\begin{proof}
Lemma 5.1 in \cite{Mik19} and the arguments above.
\end{proof}

Informally, within spatially homogeneous models ($\mathcal H$) the neighborhood of $\mathcal P_\text{SFA}$ belongs to the set $\mathcal D^+(\text{III})$. That is, homogeneous perturbations around $\mathcal P_\text{SFA}$ always fall into $\mathcal D^+(\text{III})$. The same is true for any point $\lambda > 0$ on the curve $\mathcal P_\text{SF}(\lambda)$. This is illustrated in Figure \ref{Fig2}. Note that the LRS subset $\mathcal S^+(\text{III})$ only accomodates a part of the perturbations around $\mathcal P_\text{SFA}$, or around any other point on $\mathcal P_\text{SF}(\lambda)$. Thus homogeneous perturbations generally break the LRS symmetry of $\mathcal P_\text{SF}(\lambda)$ and the answer to the first question above is \emph{yes}. But such perturbations are restricted to Bianchi type III and the answer to the second question is \emph{no}.
\begin{figure}[!ht]
	\centering	
	\begin{overpic}[width=0.95\textwidth,tics=10]{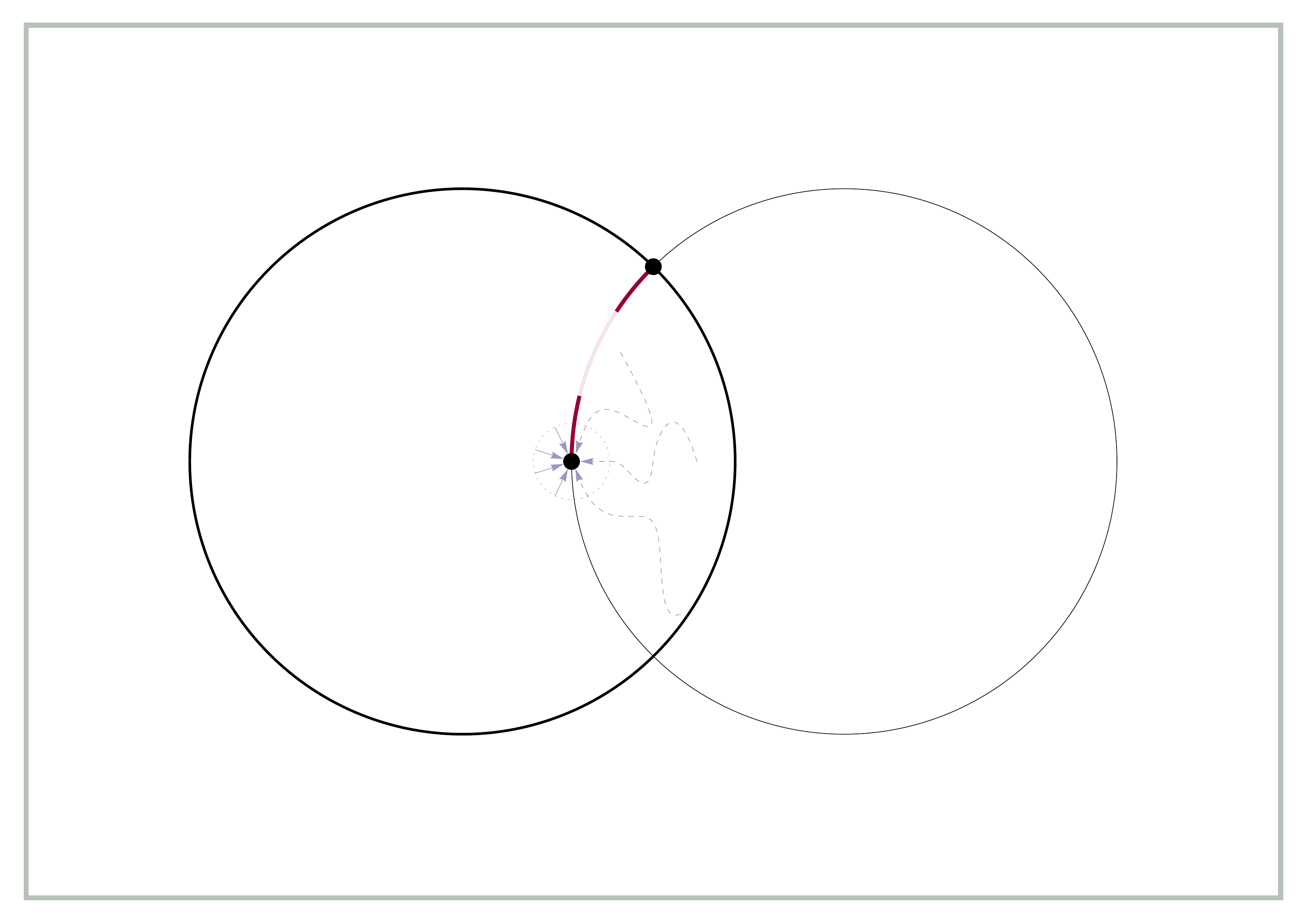}
		\put (37.5,31) {\footnotesize \textcolor{sfacolor}{$\displaystyle\mathcal P_\textrm{SFA}$ }}
		\put (48,53) {\footnotesize \textcolor{sfacolor}{$\displaystyle\mathcal P_\textrm{F}$ }}
		\put (8,62) {\footnotesize  \textcolor{boxcolor}{$\mathcal H$: Bianchi I-IX $\&$ K.S.}}
		\put (69,52) {\footnotesize\rotatebox{-21}{LRS} }
		\put (47.5,27) {\footnotesize\rotatebox{30}{$\mathcal{S}\displaystyle ^+$(III)} }
		\put (25,50) {\footnotesize\rotatebox{23}{$\mathcal{D}\displaystyle ^+$(III)} }
		\put (43.3,40.5) {\footnotesize\textcolor{sfcolor}{\rotatebox{67}{$\displaystyle\mathcal{P}_{\rm SF}(\lambda$)}} }
	\end{overpic}	
	\caption{Schematic illustration of the relation between the different invariant sets constituting the neighborhood of the equilibrium point $\mathcal P_{\rm SFA}$ (black dot). The circle marked 'LRS' represents all locally rotationally symmetric subsets of $\mathcal H$. Although $\mathcal P_\text{SFA}$ belongs to the LRS set $\mathcal S^+(\text{III})$, its neigborhood include non-LRS models as well.  The intersection of LRS models and $\mathcal D^+(\text{III})$ is $\mathcal{S}^+$(III) in which there is a monotonic function that proves the global stability of $\mathcal P_{\rm SFA}$ for $w>-1/3$ (Theorem \ref{thm:global}) illustrated here by curved, dashed arrows. Also, $\mathcal P_{\rm SFA}$ is locally stable w.r.t. all spatially homogeneous perturbations for $w>-1/3$ (Theorem \ref{thm:local}), as illustrated here by a circle of short arrows around the equilibrium point.  }
	\label{Fig2}
\end{figure}

\subsection{Dynamical system \label{ch:dynamicalsystem}}
Above we established $\mathcal D^+(\text{III})$ as the neighborhood of $\mathcal P_\text{SFA}$  within spatially homogeneous cosmological models ($\mathcal H$). In order to establish the local stability of $\mathcal P_\text{SFA}$ with respect to all homogeneous perturbations, it is thus sufficient to consider the dynamical system associated with $\mathcal D^+(\text{III})$. Our starting point is the dynamical system associated with the set $\mathcal D (\text{I-VII}_h)$ given in \cite{Mik19}. This is an autonomous system of first order equations for 11 variables subject to 5 non-linear constraint equations. By inserting the defining conditions $A=\sqrt{3}N_\times>0$ and $N_+ = \sqrt{3} N_-$ we obtain the dynamical system below, which is associated with the invariant set $\mathcal D^+(\text{III})$. It is important to note that, unlike the LRS subsystem (\ref{DS:S})-(\ref{DS:p}) this system generally breaks all the axiomatic conditions (\ref{ansatz:1})-(\ref{ansatz:3}) of the matter ansatz. Specifically, we note that $p_\varphi / \rho_\varphi \in [-\tfrac{1}{3}, 1]$ is a time-dependent quantity and that the energy flux $q^\mu$ is generally non-zero\footnote{$q^\mu=0$ corresponds to the special cases where $\nabla_\mu \varphi$ is parallel to or orthogonal to the normal congruence.}.

\paragraph{Equations of motion:}
\begin{align} 
	\Theta' &= \Theta (q-2) -2\sqrt{3} N_\times V_1 \,, \label{new:T} \\ 
	V_1' &= V_1(q+2\Sigma_+)-2\sqrt{3} \Sigma_3 V_3 \,, \label{new:V1} \\
	V_3' &= V_3(q-\Sigma_+ + \sqrt{3} \Sigma_- ) \,, \label{new:V3} \\
	\Sigma_+' &= \Sigma_+ (q-2)+3\Sigma_3^2 - 2 (N_-^2+N_\times^2) + V_3^2 -2V_1^2 \,, \label{new:Sp} \\
	\Sigma_-' &= \Sigma_- (q-2)-\sqrt{3}(V_3^2 - 2\Sigma_\times^2 + \Sigma_3^2 +2N_-^2- 2 N_\times^2)   
	\,, \label{new:Sm} \\
	\Sigma_\times' &= \Sigma_\times (q-2-2\sqrt{3} \Sigma_-) - 4\sqrt{3} N_\times N_-  \,, \label{new:Sc} \\
	\Sigma_3' &= \Sigma_3 (q-2-3\Sigma_++\sqrt{3}\Sigma_-)+2\sqrt{3}V_1V_3 \,, \label{new:S3}  \\
	N_-' &= N_- (q+2\Sigma_+) + 2\sqrt{3}(\Sigma_\times N_\times + \Sigma_- N_-)  \,, \\
	N_\times' &= N_\times (q+2\Sigma_+) \,, \label{new:Nx} 
\end{align}
where 
\begin{align}
q &= \frac{1}{2}(1+3w)\Omega + 2\left(\Theta^2+\Sigma_+^2 +\Sigma_-^2 +\Sigma_\times^2 + \Sigma_3^2\right)\,, \\
\Omega &= 1-\Theta^2-V_1^2-V_3^2-\Sigma_+^2 -\Sigma_-^2 -\Sigma_\times^2 - \Sigma_3^2 - N_-^2 - 4 N_\times^2 \,.
\label{new:def:sigma}
\end{align}
\paragraph{Constraints:}
\begin{align}
\sqrt{3} N_\times \Sigma_+ + \Sigma_- N_\times - \Sigma_\times N_- - \Theta V_1 &= 0 \, ,\label{C1} \\
N_\times \Sigma_3 + \Theta V_3 &= 0 \,, \label{C2} \\
\Theta^2 + V_1^2 + V_3^2  + \Sigma_+^2 +\Sigma_-^2 +\Sigma_\times^2 + \Sigma_3^2  + N_-^2 + 4N_\times^2 &\le 1 \,, \label{inequality:1} \\
N_\times &> 0 \,, \label{inequality:2} \\
V_3 &> 0 \,. \label{inequality:3}
\end{align}
Here the orthonormal frame is fixed uniquely relative to geometrical anisotropies and matter anisotropies. Since the gauge is fixed completely all variables represent physical degrees of freedom and the dimension of the dynamical system is the number of variables minus the number of constraint equations: $d=9-2=7$. Among these variables, $\{\Theta, V_1, V_3\}$ are associated with $X_\mu$ and describe the gradient of the scalar field $\varphi$, $\{\Sigma_+, \Sigma_-, \Sigma_\times, \Sigma_3\}$ are associated with $\Sigma_{ab}$ and describe the shear seen by an observer comoving with the perfect fluid (normal congruence), and $\{N_-, N_\times\}$ are associated with $N_{ab}$ and $A_a$ and describe the spatial curvature. As for the inequalities, it should be commented that (\ref{inequality:1}) corresponds to the positive energy condition $\Omega\ge0$ and that (\ref{inequality:2}) ensures Bianchi type III ($A=\sqrt{3}N_\times\rightarrow 0$ is the Bianchi type II boundrary). Discrete symmetries of the dynamical system have been employed to choose the positive sign of $V_3$ without loss of generality. See \cite{Mik19} for further details.

\paragraph{LRS subsystem} Next let us identify the LRS subsystem $\mathcal S^+(\text{III})$ and connect it to the metric/coordinate approach used in section \ref{ch:LRS}. It is obtained by imposing the following conditions on $\mathcal D^+(\text{III})$:
\be
\mathcal S^+(\text{III}): \quad \Theta=V_1=\Sigma_\times = \Sigma_3  = N_- = 0\,, \quad \Sigma_-=-\sqrt{3} \Sigma_+ \,. \;
\label{SIII}
\ee
In fact, this is the only representation of the LRS subsystem with the employed gauge fixing prescription \cite{Mik19}. Using the evolution equations above, it is easy to check that the conditions (\ref{SIII}) are preserved in time, that is, $\mathcal S^+(\text{III})$ forms an invariant subset of $\mathcal D^+(\text{III})$. Note that the constraint equations (\ref{C1}) and (\ref{C2}) are satisfied trivially so the remaning variables $(V_3, \Sigma_+, N_\times)$ are independent and the dimension of $\mathcal S^+(\text{III})$ is $d=3$. Also note that the normalized shear tensor takes the form $\Sigma_{ab}=\text{diag}(-2\Sigma_+, -2\Sigma_+, 4\Sigma_+)$ and that the LRS axis is identified with the third basis vector $\mathbf e_3$. The connection to the metric approach employed in section \ref{ch:LRS} is given by 
\be
\Omega_\varphi=V_3^2\,, \quad \Sigma=-2\Sigma_+ \,, \quad \Omega_k=4N_\times^2 \,.
\label{topdown}
\ee
Given this identification the  dynamical system (\ref{DS:S})-(\ref{DS:p}) is reproduced ``top-down'' from the evolution equations above. 

\paragraph{Shear-free subsystem}
The shear-free Bianchi type III model is an invariant subset of $\mathcal D^+(\text{III})$ (and $\mathcal S^+(\text{III})$) defined by the following conditions:
\be
\mathcal S_\text{SF}^+(\text{III}): \quad \Theta=V_1 = \Sigma_-= \Sigma_+ = \Sigma_\times = \Sigma_3 = N_- = 0\,, \quad V_3 = \sqrt{2} N_\times > 0 \,. \; 
\label{SIII:SF}
\ee
The dimension of the set is $d=1$.

\subsection{Local stability analysis \label{ch:localstability}}
Let us reference the coordinates of $\mathcal D^+(\text{III})$ with respect to the basis
\be
\left\{ \Theta, V_1, V_3, \Sigma_+, \Sigma_-, \Sigma_\times, \Sigma_3, N_-, N_\times \right\} \,.
\label{basis}
\ee
Each state vector $X\in\mathcal D^+(\text{III})$ is thus seen as a point in an extended state space $\mathbb R^9$. The physical part of $\mathbb R^9$, that corresponds to the state space $\mathcal D^+(\text{III})$, is the part that satisfies the constraint equations (\ref{C1})-(\ref{C2}) and the inequalities (\ref{inequality:1})-(\ref{inequality:3}).

Shear-free solutions lay along the following curve in $\mathbb R^9$:
\be
\mathcal{P}_\text{SF}(\lambda) = \lambda \left(0, 0, \frac{1}{\sqrt{3}}, 0, 0, 0, 0, 0, \frac{1}{\sqrt{6}} \right) \,, \quad \lambda\in[0,1]\,,
\label{eq:PSFA}
\ee
where $\lambda$ again evolves according to (\ref{ev:lambda}). The set $\mathcal S_\text{SF}^+(\text{III})$ corresponds to the half-open interval $\lambda\in (0,1]$. As with the parametrization (\ref{curve:3D}) the flat Friedmann solution and the shear-free attractor correspond to the end points:
\be
\mathcal P_\text{F} = \mathcal P_\text{SF}(0)\,,\qquad \mathcal P_\text{SFA} = \mathcal P_\text{SF}(1) \,.
\ee
These are the only shear-free equilibrium points for $w\neq-1/3$. For $w=-1/3$ $\mathcal{P}_\text{SF}(\lambda)$ is a one-parameter family of equilibrium points. 

First we keep $w$ free and consider arbitrary perturbations around $\mathcal P_\text{SFA}$:
\be
\mathrm{X} = \mathcal{P}_\text{SF}(1) + \delta\mathrm{X}\,.
\ee
Constraint equations (\ref{C1}) and (\ref{C2}) are used to eliminate the variables $\Sigma_+$ and $\Sigma_3$. The linearized equations can then be written $\mathrm{\delta X}' = \mathrm{M} \, \mathrm{\delta X}$, where $\mathrm{M}$ is a 7 by 7 matrix with eigenvalues
\be
-1 + \sqrt{3} i, \; -1 -\sqrt{3}i, 
-1 - 3 w \,,
\nonumber
\ee
where the first two eigenvalues have a 3-fold degeneracy each. For $w>-1/3$ the real part of all the eigenvalues are negative. Thus $\mathcal{P}_\text{SFA}$ is stable and an attractor. For $w<-1/3$ the last eigenvalue is positive, so in this case $\mathcal{P}_\text{SFA}$ is a saddle. The unstable direction is along $\mathcal{P}_\text{SF}(\lambda)$ towards the flat FLRW solution. This is consistent with the no-hair theorem proved for this class of models in \cite{Ben18}. 

Next we consider the particular case $w=-1/3$. Now, $\mathcal{P}_\text{SF}(\lambda)$ is a one-parameter family of equilibrium points. At each point we consider arbitrary perturbations:
\be
\mathrm{X} = \mathcal{P}_\text{SF}(\lambda) + \delta\mathrm{X} \,.
\ee
Again there are only three distinct eigenvalues:
\be
-1 \pm \sqrt{1-4\lambda^2}, \quad 0 \,.
\nonumber
\ee 
There is only one zero eigenvalue and three copies of each of the other two. Thus at each point $\lambda\in(0, 1]$ there is one zero eigenvalue and 6 eigenvalues with negative real part. The zero eigenvalue is associated with the eigenvector $\tfrac{d}{d\lambda} \mathcal P_\textrm{SF}$ and thus merely reflects perturbations along the fix-point curve itself. We conclude that, as a whole, the set of fix points given by the curve $\mathcal P_\text{SF}(\lambda)$ is stable.

We have thus showed that the equilibrium sets of $\mathcal S^+_\text{SF}$ are stable with respect to all homogeneous perturbations. Each equilibrium point corresponds to a unique \emph{transitively self-similar cosmological model}, informally, an expanding cosmological model whose physical states at different times are similar up to a transformation of length scale (that preserves the expansion normalized variables).\footnote{See section 5.3.3 of \cite{bok:EllisWainwright} for technical definition.} Since $\mathcal S^+_\text{SF}$ are the only shear-free solutions with anisotropic spatial curvature, within our considered class of spatially homogeneous models ($\mathcal H$), the local analysis above in combination with Lemma \ref{lemma:neighborhood} gives the following result:
\newpage
\begin{theorem}
All equilibrium sets in $\mathcal H$ that correspond to a transitively self-similar eFLRW model (definition \ref{def:eFLRW}) are stable with respect to all homogeneous perturbations for $w\ge -1/3$. Specifically:
 \begin{enumerate}
 	\item[a)] Within spatially homogeneous models $\mathcal H$ the only shear-free transitively self-similar eFLRW models are those that correspond to $\mathcal P_\text{SFA}$ for all $w\in[-1,1]$ and $\mathcal{P}_\text{SF}(\lambda)$ with $\lambda \in (0,1]$ for $w=-1/3$. 
 	\item[b)] The equilibrium point $\mathcal P_\text{SFA}$ is stable with respect to all homogeneous perturbations for $w>-1/3$.
 	\item[c)] The equilibrium set $\mathcal{P}_\text{SF}(\lambda)$ ($w=-1/3$) as a whole, which includes $\mathcal P_\text{SFA}=\mathcal P_\text{SF}(1)$, is stable with respect to all homogeneous perturbations.
 \end{enumerate}
	\label{thm:local}
\end{theorem}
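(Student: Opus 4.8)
The plan is to turn the claimed global/structural statement into a purely local, finite-dimensional linearization problem and then read off the spectrum. The decisive preliminary is Lemma \ref{lemma:neighborhood}: since every spatially homogeneous perturbation of a point on the shear-free curve with $\lambda>0$ stays inside $\mathcal D^+(\text{III})$, stability with respect to \emph{all} homogeneous perturbations (i.e.\ within $\mathcal H$) is \emph{equivalent} to stability within the single $7$-dimensional invariant set $\mathcal D^+(\text{III})$ whose flow is \eqref{new:T}--\eqref{new:Nx}. This is exactly what allows me to discard Bianchi types VIII, IX and Kantowski--Sachs: the lemma guarantees that no perturbation can leak out of $\mathcal D^+(\text{III})$, so no separate analysis of those geometries is required.

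For part (a) I would invoke the uniqueness result of \cite{Mik18}, which singles out $\mathcal S^+_\text{SF}(\text{III})$ as the \emph{only} shear-free solutions with anisotropic spatial curvature inside $\mathcal H$, and then extract the equilibria from the scalar evolution \eqref{ev:lambda}. For $w\neq-1/3$ the only zero of $\lambda'$ with $\lambda>0$ is $\lambda=1$, namely $\mathcal P_\text{SFA}$; for $w=-1/3$ the entire curve $\mathcal P_\text{SF}(\lambda)$ consists of fixed points. Since transitive self-similarity is precisely the statement that a model is an equilibrium point of the expansion-normalized system, this list is complete, establishing (a).

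Parts (b) and (c) are then the linearization already carried out in Section \ref{ch:localstability}, which I would assemble as follows. Using the two constraints \eqref{C1}--\eqref{C2} to eliminate $\Sigma_+$ and $\Sigma_3$, I linearize the remaining flow about the relevant equilibrium and compute the Jacobian spectrum. At $\mathcal P_\text{SFA}$ the eigenvalues are $-1\pm\sqrt3\,i$ (each with threefold multiplicity) and $-1-3w$; all lie strictly in the left half-plane exactly when $w>-1/3$, so $\mathcal P_\text{SFA}$ is a hyperbolic sink and, by Hartman--Grobman, locally asymptotically stable, proving (b). For $w=-1/3$ the spectrum at each $\mathcal P_\text{SF}(\lambda)$ is $-1\pm\sqrt{1-4\lambda^2}$ (threefold each) together with a single $0$.

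The only genuinely nontrivial step — and the main obstacle — is part (c), the handling of that zero eigenvalue: a fixed point possessing a neutral direction is not asymptotically stable in isolation, so Lyapunov stability must be argued for the \emph{set} rather than for individual points. I would identify the kernel eigenvector with the tangent $\tfrac{d}{d\lambda}\mathcal P_\text{SF}$ to the equilibrium curve, note that the remaining six eigenvalues have negative real part for every $\lambda\in(0,1]$, and conclude by normal hyperbolicity (equivalently, a centre-manifold argument, the centre manifold being the equilibrium curve itself) that $\mathcal P_\text{SF}(\lambda)$ is transversally attracting with only neutral drift along itself. Hence the curve as a whole is stable. Combining the reduction via Lemma \ref{lemma:neighborhood}, the uniqueness input of \cite{Mik18} for (a), and these two spectra completes the proof.
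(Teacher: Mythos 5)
Your proposal is correct and follows essentially the same route as the paper: reduction of the problem to the invariant set $\mathcal D^+(\text{III})$ via Lemma \ref{lemma:neighborhood}, the uniqueness of shear-free solutions with anisotropic spatial curvature within $\mathcal H$ (established in \cite{Mik18,Mik19}) together with the equilibria of the $\lambda$-flow (\ref{ev:lambda}) for part (a), and the linearized spectra of Section \ref{ch:localstability} for parts (b) and (c). Your explicit normal-hyperbolicity/centre-manifold treatment of the zero eigenvalue in (c) merely makes rigorous what the paper states informally --- that the kernel direction is tangent to the equilibrium curve $\tfrac{d}{d\lambda}\mathcal P_\text{SF}$, with the six transverse eigenvalues having negative real part for all $\lambda\in(0,1]$ --- so there is no substantive difference.
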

\begin{proof}
\begin{enumerate}
\emph{a):} Theorem 1 in \cite{Mik19} proves that $\mathcal S^+_\text{SF}$ represents all shear-free ($\sigma_{\mu\nu}=0$) solutions with anisotropic spatial curvature (${^3}S_{\mu\nu}\neq0$) within $\mathcal D (\text{I-VII}_h)$. To prove that this result is valid within $\mathcal H$, use the same argument as in Lemma \ref{lemma:neighborhood} to prove the absence of such solutions in Bianchi type VIII and Bianchi type IX and refer to section 5.6 of \cite{Mik18} for the Kantowski-Sachs metric.  
\emph{b) and c):} Use Lemma \ref{lemma:neighborhood} in combination with the local stability analysis above.
\end{enumerate}
\end{proof}

\section{Conclusion}\label{sec:Conclusion}

In this paper we first reviewed how one can construct shear-free cosmological models with underlying anisotropic spatial geometry, or \emph{extended FLRW models}, by introducing an imperfect fluid. We formulated a \emph{matter ansatz} with three conditions (\ref{ansatz:1})-(\ref{ansatz:3}) on the energy-momentum tensor that are necessary for such solutions to exist and be dynamically equivalent to a reference FLRW model. All FLRW models can be considered as special, or ``tuned'', in the sense that the shear tensor is exactly zero. But the axiomatic conditions (\ref{ansatz:1})-(\ref{ansatz:3}), that underlie our definition \ref{def:eFLRW} of an extended FLRW model, represent a further tuning of configuration space. The main scope of this paper was to investigate the naturalness of this ansatz in a dynamical system context by a thorough stability analysis. Our investigation is restricted to extended FLRW models where the matter ansatz is realized by a massless scalar field $\varphi$ that respects the positive energy condition $\rho_\varphi>0$. The advantage of using a concrete physical model for the required imperfect fluid is that no ad-hock assumptions were necessary on the energy-momentum tensor. A massless scalar field,  with a spatially homogeneous gradient $\nabla_\mu \varphi$ that breaks isotropy, is capable of satisfying the three conditions (\ref{ansatz:1})-(\ref{ansatz:3}), yet rich enough to break all of them. 

Our choice of matter model picks out the LRS Bianchi type III metric as the unique spatially homogeneous spacetime that can accomodate an extended FLRW model. This solution is represented by the curve $\mathcal{P}_\text{SF}(\lambda), \lambda\in [0,1]$, that connects the shear-free attractor $\mathcal P_\text{SFA}$ ($\lambda=1$) with the flat FLRW solution $\mathcal P_\text{F}$ ($\lambda = 0$). Our main results for the stability of the equilibrium points of $\mathcal{P}_\text{SF}(\lambda)$ are given by Theorem \ref{thm:global} and Theorem \ref{thm:local}, whose content is illustrated in Figure \ref{Fig2}. Theorem \ref{thm:global} establishes the stability of the solution within LRS Bianchi type III models, that is, within the invariant set $\mathcal S^+(\text{III})$. In this simple model we where able to find a monotonic function and prove the stability globally. But one should note that in $\mathcal S^+(\text{III})$, conditions (\ref{ansatz:2}) and (\ref{ansatz:3}) are identities that result directly from the underlying theory (Einstein equations sourced by a massless scalar field). Theorem \ref{thm:local} establishes the stability with respect to all homogeneous perturbations compatible with our matter model. Such perturbations are generally non-LRS and generally break all conditions (\ref{ansatz:1})-(\ref{ansatz:3}) of the matter ansatz. Remarkably, we found the equilibrium points of $\mathcal{P}_\text{SF}(\lambda)$ to be stable with respect to all homogeneous perturbations. We have thus shown that geometrical anisotropies and matter anisotropies tend to rearrange dynamically to a configuration where the ansatz for the energy-momentum tensor is satisfied. Extended FLRW models requires a matter sector fine-tuned with respect to geometrical anisotropies, yes, but this is indeed a dynamically natural configuration. 

Finally, we would like to add some comments about the viability of eFLRW models in a broader cosmological context. In particular, since they share the same background dynamics as FLRW models, to what extent can these models be distinguished on the basis of recent cosmological observations? Owing to the presence of anisotropic curvature in eFLRW models, observational signatures are expected to appear either at the level of the background geometry, or in the dynamics of linear cosmological perturbations. At the background level, it has been shown that the anisotropic curvature will alter both the luminosity and angular diameter distances, which in principle allows one to test these models through the observed angular distribution of type Ia supernovae~\cite{main:Koivisto11,Menezes:2012kc}. At the perturbative level, eFLRW models predict
off-diagonal correlations in the temperature spectrum of CMB~\cite{Pereira:2015pxa}, as well as different dynamics for the polarization of tensor perturbations~\cite{pert:Pereira17}. However, since these signatures are stronger at the lowest CMB multipoles, where cosmic variance is high, it seems unlikely that extended FLRW models can be ruled out on the basis of current CMB data. In either cases, it is expected that forthcoming cosmological observations will have enough constraining power to give nature's verdict about this class of models. In the meantime, they remain as viable phenomenological candidates.

\paragraph{Acknowledgements}
We thank Hans A. Winther for useful comments on the manuscript. TSP thanks Brazilian Funding Agencies CNPq (311527/2018-3) and Fundação Araucária (CP/PBA-2016) for their financial support.

\appendix
\section{Photon propagation and isotropy of the CMB}
\label{sec:PhotonProp}
It is well-known that the background isotropy of the CMB remains intact in cosmological models that admit a conformal Killing vector that is parallel to the fundamental congruence \cite{Ehlers:1966ad,Clarkson:1999yj,Obukhov:2000jf}. This is the case for eFLRW models with the metric (\ref{line-element2}), where the conformal factor is $a^2(t)$. Nevertheless, it is instructive to see how the details turn out by a worked example. Here we explicitly calculate null geodesics in eFLRW models of LRS Bianchi type III and the Kantowski-Sachs metric and show that the isotropy of the background CMB remains intact. The interpretation is that the true source of anisotropy of the CMB in general Bianchi models \cite{Barrow:1985tda, Jaffe:2005} is the rate of shear, not other anisotropies such as that of three-dimensional curvature or vorticity; see \cite{int:Obukhov1992} for a clear discussion. 

We start from the following line element that cover LRS spacetimes of Bianchi type III ($k\,<\,0$) as well as the Kantowski-Sachs metric ($k\,>\,0$): 

\be
ds^2=-dt^2+e^{2\alpha(t)} \left( dz^2 + \frac{d\rho^2}{1-k\rho^2} + \rho^2 d\phi^2  \right)\,.
\label{metric:LRSIII&KS}
\ee
The spatial sections of Bianchi type III is $\mathcal{R}_1 \times \mathcal{H}_2$, whereas Kantowski-Sachs has spatial sections $\mathcal{R}_1 \times \mathcal{S}_2$. We take the two-dimensional slices $\mathcal{H}_2$ and $\mathcal{S}_2$ to be parametrized by the radial coordinate $\rho$ and the angular coordinate $\phi$. Orthogonal to these is the LRS axis parametrized by the coordinate $z$. Due to the homogeneity of the spatial sections we can without loss of generality put the observer at the origin of the coordinate system, i.e., $z=\rho=0$.  Take $\lambda$ to be an affine parameter parametrizing the path of the photons towards the observer at the origin. Then, due to the plane symmetry of these spacetimes one must have $d\phi/d\lambda=0$. Next we turn our attention to the geodesic equation, which reads
\be
\label{photonconstr}
\frac{d^2x^\mu}{d\lambda^2} + \Gamma^\mu_{\alpha\beta} \frac{dx^\alpha}{d\lambda}\frac{dx^\beta}{d\lambda} = 0\,.
\ee
Here $\{x^\mu\}\,=\,\{t,z,\rho,\phi\}$ is the set of coordinates. Invoking $d\phi/d\lambda=0$ in this equation, and calculating as usual, one ends up with the equation
\begin{align}
&\frac{dU}{d\lambda} + e^{-2\alpha}H(P^2+Q^2) = 0\,, \label{eq:U}
\end{align} 
where $U\equiv dt/d\lambda$, alongside two constants of motion:
\begin{align}
P=e^{2\alpha}\frac{dz}{d\lambda} \quad\quad\textrm{and}\quad\quad Q=\frac{e^{2\alpha}}{\sqrt{1-k\rho^2}} \frac{d\rho}{d\lambda} \,.
\end{align}
To connect the above with standard notation: note that the scale factor $a(t)$ is defined such that $a(t)=e^{\alpha(t)}$. Hence $H(t)=\dot{a}/a=\dot{\alpha}$. 
An additional constraint results from the fact that photons follow null-geodesics. Specifically,
\be
\label{eq:constrPhoton}
g_{\mu\nu} \frac{dx^\mu}{d\lambda} \frac{dx^\nu}{d\lambda} = 0 \quad\quad\rightarrow\quad\quad U^2=e^{-2\alpha}(P^2+Q^2)\,.
\ee

\paragraph{Bending of the photon path. }Next we want to calculate the bending of the null-geodesic relative to the intrinsic LRS-axis (the $z$-axis), as a function of $\lambda$.  Take $v^\mu$ to be the components of the unit vector parallel to this axis. Furthermore: take $w^\mu$ to be the components of the unit vector parallel to the \emph{spatial} direction of the four velocity of the photon. It follows that
\be
v^\mu = e^{-\alpha} \left(0,1,0,0\right)\quad\quad\textrm{and}\quad\quad w^\mu = U^{-1}\left( 0,\frac{dz}{d\lambda}, \frac{d\rho}{d\lambda},0\right).
\ee
The angle $\theta$ between the vectors $\mathbf{v}$ and $\mathbf{w}$ is now found through the inner product. In particular:
\be
\cos \theta = g_{\mu\nu} v^\mu w^\mu = \frac{P}{\sqrt{P^2+Q^2}}\,,
\label{eq:deftheta}
\ee
where in the last step we have used (\ref{eq:constrPhoton}). Now, since $P$ and $Q$ are constants of motion, it follows that the photon propagates with a fixed angle relative to the LRS axis.

\paragraph{Redshift of the photons.} Consider two photons emitted in the same direction with a small time separation $\tau(\lambda)$. The time coordinates of the two photons is denoted as $t_1(\lambda)=t(\lambda)$ and $t_2(\lambda)=t(\lambda)+\tau(\lambda)$.  The redshift $Z(\lambda)$ of a photon emitted at a time $t(\lambda)$ is implicitely defined through
\be
1+Z(\lambda) = \frac{\tau_0}{\tau(\lambda)}\,,
\label{eq:defz}
\ee  
where $\tau_0=\tau(\lambda_0)$ is the time separation today. Differentiation with respect to the e-fold time parameter $\alpha(\lambda)$ yields
\be
\frac{dZ}{d\alpha} = -\frac{1}{\tau} \frac{d\tau}{d\lambda} \left( \frac{1+Z}{UH} \right)\,. 
\label{eq:dzdalpha}
\ee
To calculate the factor $d\tau/d\lambda$ on the right-hand side we consider the four-velocity identity (\ref{eq:constrPhoton}) for each of the two photons:
\begin{align}
U^2(t_1) &=e^{-2\alpha(t_1)}(P^2+Q^2)_1\,, \label{nr1} \\
U^2(t_2) &=e^{-2\alpha(t_2)}(P^2+Q^2)_2\,. \label{nr2}   
\end{align}
$P^2+Q^2$ is a constant of motion, but the numerical value is different for the two photons and we have indicated this by the subscripts.  Using (\ref{eq:deftheta}) we get 
\begin{align}
(P^2+Q^2)_1 = \cos^{-2}\!\theta_1 \frac{dx}{d\lambda} e^{4\alpha(t_1)}\quad\quad\textrm{and}\quad\quad (P^2+Q^2)_2 = \cos^{-2}\!\theta_2 \frac{dx}{d\lambda} e^{4\alpha(t_2)}\,.
\end{align}
Since we are considering two photons emitted (and detected) in the same direction, we have $\theta_1=\theta_2$ (recall that $\theta$ is a constant of motion). Expanding to first order in $\tau$ we get
\be
(P^2+Q^2)_2 = (1+4H\tau)(P^2+Q^2)_1\,. \label{nr3}
\ee  
Next we expand (\ref{nr2}) to first order in $\tau$
\be
U^2(t)+2U(t)\frac{d\tau}{d\lambda} =  e^{-2\alpha(t)} \left( 1 -2H\tau   \right) (P^2+Q^2)_2\,.
\label{nr5}
\ee
Inserting (\ref{nr1}) and (\ref{nr3}) and neglecting the second order term in $\tau$ we find the sought after expression:
\be 
\frac{1}{\tau}\frac{d\tau}{d\lambda} = HU\,. 
\ee
Now the differential equation for the redshift (\ref{eq:dzdalpha}) simplifies to
\be
\frac{dZ}{d\alpha} = -(1+Z)\,,
\ee
and it is clear that the redshift of a photon is independent of its direction of propagation. Integration with $Z(\lambda_0)=0$ gives: 
\be
1+Z = \frac{a_0}{a}\,,
\ee 
where $a=e^{\alpha}$ is the scale factor when the photon is emitted and $a_0$ the scale factor today. Thus we see how the background CMB is not changed relative to FLRW in these spacetimes.

\appendix

\bibliographystyle{JHEP}
\bibliography{refs}

\end{document}